\DeclareMathOperator{\down}{\downarrow}
\DeclareMathOperator{\up}{\uparrow}
\newcommand{\res}{\text{res}}
\newcommand{\noin}{\noindent}
\newcommand{\be}{\begin{enumerate}}
\newcommand{\ee}{\end{enumerate}}
\newcommand{\bi}{\begin{itemize}}
\newcommand{\ei}{\end{itemize}}
\renewcommand{\le}{\leqslant}
\renewcommand{\ge}{\geqslant}
\newcommand{\meet}{\land}
\newcommand{\join}{\lor}
\newcommand{\DOWN}{\text{DOWN}}
\newcommand{\NULL}{\texttt{null}}
\newcommand{\cA}{\mathcal{A}}
\newcommand{\cB}{\mathcal{B}}
\newcommand{\cC}{\mathcal{C}}
\newcommand{\cD}{\mathcal{D}}
\newcommand{\cE}{\mathcal{E}}
\title{Space-Efficient Data Structures for Lattices}
\author{J. Ian Munro}{Cheriton School of Computer Science, University of Waterloo, Waterloo, Ontario N2L 3G1, Canada}{imunro@uwaterloo.ca}{https://orcid.org/0000-0002-7165-7988}{}
\author{Bryce Sandlund}{Cheriton School of Computer Science, University of Waterloo, Waterloo, Ontario N2L 3G1, Canada}{bcsandlund@uwaterloo.ca}{}{}
\author{Corwin Sinnamon}{Department of Computer Science, Princeton University, Princeton, NJ 08540, USA}{sinncore@gmail.com}{https://orcid.org/0000-0003-0280-9498}{}
\authorrunning{J.\,I. Munro, B. Sandlund, and C. Sinnamon}
\keywords{Lattice, Partially-ordered set, Space-efficient data structure, Succinct data structure}
\date{}
\begin{document}

\maketitle

\begin{abstract}
A lattice is a partially-ordered set in which every pair of elements has a unique meet (greatest lower bound) and join (least upper bound).
We present new data structures for lattices that are simple, efficient, and nearly optimal in terms of space complexity.

Our first data structure can answer partial order queries in constant time and find the meet or join of two elements in $O(n^{3/4})$ time, where $n$ is the number of elements in the lattice.
It occupies $O(n^{3/2}\log n)$ bits of space, which is only a $\Theta(\log n)$ factor from the $\Theta(n^{3/2})$-bit lower bound for storing lattices.
The preprocessing time is $O(n^2)$.
This structure admits a simple space-time tradeoff so that, for any $c \in [\frac{1}{2}, 1]$, the data structure supports meet and join queries in $O(n^{1-c/2})$ time, occupies $O(n^{1+c}\log n)$ bits of space, and can be constructed in $O(n^2 + n^{1+3c/2})$ time.

Our second data structure uses $O(n^{3/2}\log n)$ bits of space and supports meet and join in $O(d \frac{\log n}{\log d})$ time, where $d$ is the maximum degree of any element in the transitive reduction graph of the lattice. This structure is much faster for lattices with low-degree elements.

This paper also identifies an error in a long-standing solution to the problem of representing lattices.
We discuss the issue with this previous work.
\end{abstract}

\section{Introduction}
A lattice is a partially-ordered set with the property that for any pair of elements $x$ and $y$, the set of all elements greater than or equal to both $x$ and $y$ must contain a unique minimal element less than all others in the set. This element is called the \emph{join} (or \emph{least upper bound}) of $x$ and $y$. A similar condition holds for the set of all elements less than both $x$ and $y$: It must contain a maximum element called the \emph{meet} (or \emph{greatest lower bound}) of $x$ and $y$.

We consider lattices from the perspective of succinct data structures.
This area of study is concerned with representing a combinatorial object in essentially the minimum number of bits while supporting the ``natural'' operations in constant time.
The minimum number of bits required is the logarithm (base 2) of the number of such objects of size $n$, e.g. about 2n bits for a binary tree on $n$ nodes.
Succinct data structures have been very successful in dealing with trees, planar graphs, and arbitrary graphs.
Our goal in this paper is to broaden the horizon for succinct and space-efficient data structures and to move to more algebraic structures.
There has indeed been progress in this direction with abelian groups~\cite{FM06} and distributive lattices~\cite{MS18}.
We take another step here in studying space-efficient data structures for arbitrary finite lattices.

Lattices have a long and rich history spanning many disciplines.
Existing at the intersection of order theory and abstract algebra, lattices arise naturally in virtually every area of mathematics~\cite{Gra16}.
The area of formal concept analysis is based on the notion of a \emph{concept lattice}. These lattices have been studied since the 1980s~\cite{Wil82} and have applications in linguistics, data mining, and knowledge management, among many others~\cite{Gan05}.
Lattices have also found numerous applications in the social sciences~\cite{Mon03}.

Within computer science, lattices are also important, particularly for programming languages.
Lattice theory is the basis for many techniques in static analysis of programs, and thus has applications to compiler design.
Dataflow analysis and abstract interpretation, two major areas of static analysis, rely on fixed-point computations on lattices to draw conclusions about the behaviour of a program~\cite{Nie15}.

Lattice operations appear in the problem of hierarchical encoding, which is relevant to implementing type inclusion for programming languages with multiple inheritance (among other applications)~\cite{ABLN89, Cas93, Cas99, Kra97}.
Here the problem is to represent a partially-ordered set by assigning a short binary string to each element so that lattice-like operations can be implemented using bitwise operations on these strings.
The goal is to minimize the length of the strings for the sake of time and space efficiency.


In short, lattices are pervasive and worthy of study.
From a data structures perspective, the natural question follows: How do we represent a lattice so that not too much space is required and basic operations like partial order testing, meet, and join can be performed quickly?

It was proven by Klotz and Lucht~\cite{KL71} that the number of different lattices on $n$ elements is at least $2^{\Omega(n^{3/2})}$,
and an upper bound of $2^{O(n^{3/2})}$ was shown by Kleitman and Winston~\cite{KW80}.
Thus, any representation for lattices must use $\Omega(n^{3/2})$ bits in the worst case, and this lower bound is tight within a constant factor.
We should then expect a data structure for lattices to use comparably little space.

Two naive solutions suggest themselves immediately. 
First, we could simply build a table containing the meet and join of every pair of elements in the given lattice. Any simple lattice operation could be performed in constant time. However, the space usage would be quadratic --- a good deal larger than the lower bound.
Alternatively, we could store only the transitive reduction graph of the lattice. This method turns out to be quite space-efficient: Since the transitive reduction graph of a lattice can only have $O(n^{3/2})$ edges~\cite{KL71, Win79}, the graph can be stored in $O(n^{3/2}\log n)$ bits of space; thus, the space complexity lies within a $\Theta(\log n)$ factor of the lower bound. However, the lattice operations become extremely slow as they require exhaustively searching through the graph.
Indeed, it is not easy to come up with a data structure for lattices that uses less than quadratic space while answering meet, join, and partial order queries in less than linear time in the worst case.


The construction of a lattice data structure with good worst-case behaviour also has attractive connections to the more general problem of reachability in directed acyclic graphs (DAGs). Through its transitive reduction graph, a lattice can be viewed as a special type of DAG.
Among other things, this paper shows that we can support reachability queries in constant time for this class of graphs while using subquadratic space.
Most classes of DAGs for which this has been achieved, such as planar DAGs~\cite{Tho04}, permit a strong bound on the order dimension of the DAGs within that class.
This is a property not shared by lattices, which may have order dimension linear in the size of the lattice.
A long-standing difficult problem in this line of research is to show a similar nontrivial result for the case of arbitrary sparse DAGs~\cite{Pat11}.

There has been significant progress in representation of \emph{distributive} lattices, an especially common and important class of lattices.
Space-efficient data structures for distributive lattices have been established since the 1990s~\cite{Hab01, Hab96} and have been studied most recently by Munro and Sinnamon~\cite{MS18}.
Munro and Sinnamon show that it is possible to represent a distributive lattice on $n$ elements using $O(n\log n)$ bits of space while supporting meet and join operations (and thus partial order testing) in $O(\log n)$ time.
This comes within a $\Theta(\log n)$ factor of the space lower bound by enumeration: As the number of distributive lattices on $n$ elements is $2^{\Theta(n)}$~\cite{EHR02}, at least $\Theta(n)$ bits of space are required for any representation.

The problem of developing a space-efficient data structure for arbitrary lattices was first studied by Talamo and Vocca in 1994, 1997, and 1999~\cite{TV94,TV97,TV99}.
They claimed to have an $O(n^{3/2}\log n)$-bit data structure that supports partial order queries in constant time and meet and join operations in $O(\sqrt{n})$ time.
However, there is a nontrivial error in the details of their structure.
Although much of the data structure is correct, we believe that this mistake is a critical flaw that is not easily repaired.

To our knowledge, no other data structures have been proposed that can perform lattice operations efficiently while using less than quadratic space.
Our primary motivation is to fill this gap.

\section{Contributions}
\label{sec:contributions}
Drawing on ideas from~\cite{TV99}, we present new data structures for lattices that are simple, efficient for the natural lattice operations, and nearly optimal in space complexity.
Our data structures support three queries:
\bi
\item \textbf{Test Order:} Given two elements $x$ and $y$, determine whether $x \le y$ in the lattice order.
\item \textbf{Find Meet:} Find the meet of two elements.
\item \textbf{Find Join:} Find the join of two elements.
\ei
Our first data structure (Theorem~\ref{thm:main}) is based on a two-level decomposition of a lattice into many smaller lattices.
It tests the order between any two elements in $O(1)$ time and answers meet and join queries in $O(n^{3/4})$ time in the worst case.
It uses $O(n^{3/2})$ words of space\footnote{We assume a word RAM model with $\Theta(\log n)$-bit words. Henceforth, unless bits are specified, ``$f(n)$ space'' means $f(n)$ words of size $\Theta(\log n)$.}, which is a $\Theta(\log n)$ factor from the known lower bound of $\Omega(n^{3/2})$ bits.
The preprocessing time is $O(n^2)$.

We generalize this structure (Corollary~\ref{cor:generalize}) to allow for a tradeoff between the time and space requirements. For any $c \in [\frac{1}{2}, 1]$, we give a data structure that supports meet and join operations in $O(n^{1-c/2})$ time, occupies $O(n^{1+c})$ space, and can be constructed in $O(n^2 + n^{1+3c/2})$ time. At $c=1/2$, it coincides with the first data structure.

Taking a different approach to computing meets and joins, we present another data structure (Theorem~\ref{thm:recurse}) based on a recursive decomposition of the lattice. Here the operational complexity is parameterized by the maximum degree $d$ of any element in the lattice, where the degree is defined in reference to the transitive reduction graph of the lattice. This structure answers meet and join queries in $O(d \frac{\log n}{\log d})$ time, which improves significantly on the first data structure when applied to lattices with low degree elements (as is the case for distributive lattices, for example). It uses $O(n^{3/2})$ space.

This paper is organized as follows. In Section~\ref{sec:prelim}, we give the necessary definitions and notation used throughout the paper. In Section~\ref{sec:block}, we give the main tool we use to decompose a lattice, which we call a block decomposition. Section~\ref{sec:order} describes the order-testing data structure and Section~\ref{sec:meet} extends this data structure to compute meets and joins. Some details of the preprocessing are left to Appendix~\ref{sec:initialize}.
Section~\ref{sec:deg} contains our recursive degree-bounded data structure.
In Appendix~\ref{sec:error}, we discuss the error in the papers~\cite{TV97, TV99} and give some evidence of why it may be irreparable.

\section{Preliminaries}
\label{sec:prelim}
Given a partially-ordered set (poset) $(P, \le)$, we define the \emph{downset} of an element $x \in P$ by $\down x = \{ z \in P \mid z \le x\}$ and the \emph{upset} of $x$ by $\up x = \{ z \in P \mid z \ge x\}$.
\begin{definition}
A \emph{lattice} is a partially-ordered set $(L, \le)$ in which every pair of elements has a \emph{meet} and a \emph{join}.

The \emph{meet} of $x$ and $y$, denoted $x \meet y$, is the unique maximal element of $\down x \cap \down y$ with respect to $\le$.
Similarly, the \emph{join} of $x$ and $y$, denoted $x \join y$, is the unique minimal element of $\up x \cap \up y$.
\end{definition}
Meet ($\meet$) and join ($\join$) are also called greatest lower bound (GLB) and least upper bound (LUB), respectively.
Lattices have the following elementary properties. Let $x, y, z \in L$.
\bi
\item
The meet and join operations are idempotent, associative, and commutative:
\begin{align*}
x \join x &= x & x \join (y \join z) &= (x \join y) \join z& x \join y &= y \join x\\
x \meet x &= x & x \meet (y \meet z) &= (x \meet y) \meet z& x \meet y &= y \meet x
\end{align*}
\item
If $x \le y$, then $x \meet y = x$ and $x \join y = y$.
\item
If $z \le x$ and $z \le y$, then $z \le x \meet y$. If $z \ge x$ and $z \ge y$, then $z \ge x \join y$.
\item
A lattice must have a unique \emph{top} element above all others and unique \emph{bottom} element below all others in the lattice order.
\ei
Moreover, meet and join are dual operations.
If the lattice is flipped upside-down, then meet become join and vice versa.

In this paper, we prefer to work with \emph{partial lattices}.
A partial lattice is the same as a lattice except that it does not necessarily have top or bottom elements.
Thus, the meet or join of two elements may not exist in a partial lattice; we use the symbol $\NULL$ to indicate this.
We write $x \meet y = \NULL$ if $\down x \cap \down y = \emptyset$ and $x \join y = \NULL$ if $\up x \cap \up y = \emptyset$.
Note that in a partial lattice the meet or join of $x$ and $y$ may not exist, but when they do exist they must be unique.

Equivalently, a partial lattice is a partially-ordered set satisfying the \emph{lattice property}:
If there are four elements $x_1$, $x_2$, $y_1$, and $y_2$ such that $x_1, x_2 < y_1, y_2$, then there must exists an intermediate element $z$ with $x_1, x_2 \le z \le y_1, y_2$.
See Figure~\ref{fig:butterfly}.
This statement trivially follows from the definition of a lattice; it only says that there cannot be multiple maximal elements in $\down y_1 \cap \down y_2$ or multiple minimal elements in $\up x_1 \cap \up x_2$.

Henceforth, we use the term ``lattice'' to mean ``partial lattice''.
The difference is trivial in a practical sense, and our results are easier to express when we only consider partial lattices.

\begin{figure}[!ht]
\begin{center}
\includegraphics[scale=0.7]{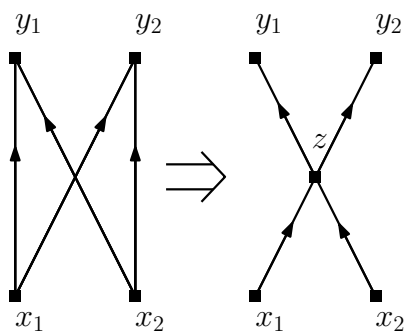}
\end{center}
\caption{The configuration on the left cannot exist in a lattice for any nodes $x_1$, $x_2$, $y_1$, and $y_2$. There must be a node $z$ between them as shown. We refer to this as the \emph{lattice property}.}
\label{fig:butterfly}
\end{figure}

We assume that any lattice we wish to represent is given initially its \emph{transitive reduction graph} (TRG).
This is a directed acyclic graph (DAG) having a node for each lattice element and an edge $(u, v)$ whenever $u < v$ and there is no intermediate node $w$ such that $u < w < v$.
The edge relation of this graph is called the \emph{covering} relation: Whenever $(u,v)$ is an edge of the TRG we say that $v$ \emph{covers} $u$.

\section{Block Decompositions}
\label{sec:block}

The main tool used in our data structure is called a block decomposition of a lattice. It is closely based on techniques used by Talamo and Vocca in~\cite{TV97, TV99}.

Let $L$ be a lattice with $n$ elements.
A block decomposition of $L$ is a partition of the elements of $L$ into subsets called \emph{blocks}.
The blocks are chosen algorithmically using the following method.
We first specify a positive integer $k$ to be the \emph{block size} of the decomposition (our application will use the block size $\sqrt n$).
Then we label the elements of $L$ as ``fat'' or ``thin'' according to the sizes of their downsets. A fat node is ``minimal'' if all elements in its downset, except itself, are thin.
Formally:

\begin{definition}
A node $x \in L$ is called \emph{fat} if $|\down x| \geq k$, and $x$ is called \emph{thin}  if $|\down x| < k$.
We say $x$ is a \emph{minimal fat node} if $x$ is fat and every other node in $\down x$ is thin.
\end{definition}

Minimal fat nodes are the basis for choosing blocks, which is done as follows.
While there exists a minimal fat node $h$ in the lattice, create a new \emph{principal} block $B$ containing the elements of $\down h$, and then delete those nodes from the lattice.
The node $h$ is called the \emph{block header} of $B$.

Deleting the elements of $B$ may cause some fat nodes to become thin by removing elements from their downsets; this should be accounted for before choosing the next block.
When there are no fat nodes in the lattice, put the remaining elements into a single block $B_{\res}$ called the \emph{residual block}.

\begin{figure}[!ht]
\begin{center}
\includegraphics[scale=0.6]{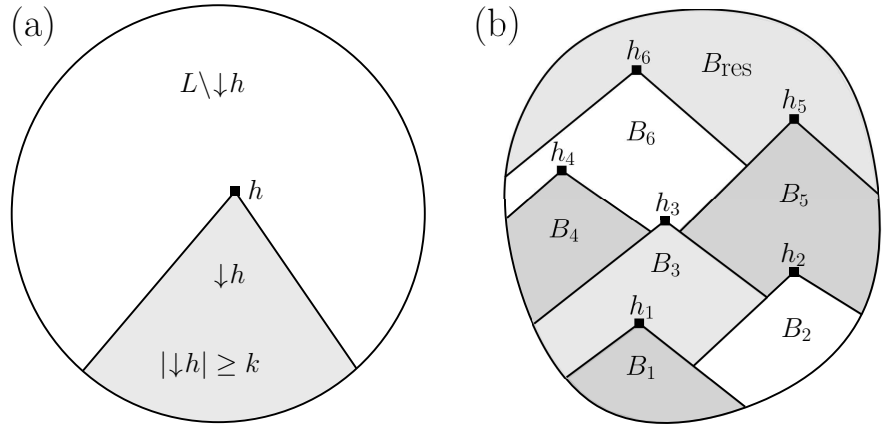}
\end{center}
\caption{(a) A minimal fat node $h$ is used as a block header during the decomposition. The downset of $h$ is removed and the process repeats on $L \setminus \down h$. (b) A block decomposition yields a set of disjoint principal blocks, each having a block header. The residual block consists of the lattice elements that are not below any block header.}
\label{fig:blockdecomp}
\end{figure}

This method creates a set of principal blocks $\{B_1, B_2, \dots, B_m\}$ and a residual block $B_{\res}$.
Each principal block $B_i$ has a block header $h_i$, which was the minimal fat node used to create $B_i$.
A block header is always the top element within its block.
The residual block may or may not have a top element, but it is not considered to have a block header regardless. Figure~\ref{fig:blockdecomp} shows a full block decomposition.

The block decomposition algorithm is summarized in Algorithm~\ref{alg:decomp_intuitive}; it will be shown later that this algorithm can be implemented to run in $O(n^{7/4})$ time, where $n$ is the number of elements in the lattice.

\begin{algorithm}[!ht]
\caption{Block Decomposition (Intuitive Version)}
\label{alg:decomp_intuitive}
\begin{algorithmic}[1]
\Require{A partial lattice $L$ on $n$ elements and a positive integer $k$.}
\Ensure{A block decomposition of $L$ with block size $k$.}
\State $i = 1$
\While{there exists a minimal fat node $h$}
\State $B_i = \down h \cap L$
\State $L = L \setminus B_i$
\State $i = i+1$
\EndWhile
\State $B_\res = L$
\end{algorithmic}
\end{algorithm}

\subsection*{Properties of Block Decompositions}


Let us note some elementary properties of block decompositions. Let $L$ be a lattice with $n$ elements.
\bi
\item Every element of the lattice lies in exactly one block.
\item There can be at most $n/k$ principal blocks as each one has size between $k$ and $n$.
Consequently, there are at most $n/k$ block headers.
\item Since the block headers are chosen to be \emph{minimal} fat nodes, every other element is thin relative to the block it lies in.
That is, if $x$ lies in a block $B$ and $x$ is not the block header of $B$, then $|\down x \cap B| < k$.
\ei

The last fact motivates the following term, which we will use frequently.
\begin{definition}
The \emph{local downset} of an element $x$ is the set $\down x \cap B$, where $B$ is the block containing $x$.
\end{definition}
Restated, the last property listed above says that the local downset of any element that is not a block header has size less than $k$.
We also note that if $h$ is the block header of a principal block $B$, then the local downset of $h$ is $B$.

Somewhat less obvious is the following lemma.

\begin{lemma}
\label{lem:blocksAreLattices}
Every block is a partial lattice.\footnote{Here the partial order on a block is inherited from the order on $L$.}
\end{lemma}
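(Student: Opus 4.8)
The plan is to verify the lattice property (as stated just before Figure~\ref{fig:butterfly}) for each block, using the lattice property of $L$ together with the structure of how blocks are formed. The key point is that a principal block $B$ with header $h$ is exactly $\down h$ at the moment it was carved off, and hence is a \emph{downset} of the sublattice $L' = L \setminus (B_1 \cup \dots \cup B_{i-1})$ that remained before $B$ was removed; likewise the residual block is a downset of the sublattice left after all principal blocks are removed. So I would first reduce to two sub-claims: (i) if $L$ is a (partial) lattice and $L'$ is obtained from $L$ by deleting a downset's worth of elements in the block-decomposition process, then $L'$ is a partial lattice; and (ii) if $L'$ is a partial lattice and $D = \down h \cap L'$ for some $h \in L'$, then $D$ is a partial lattice. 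Combining these by induction on the order in which blocks are removed handles every principal block, and a final application of (i) down to $B_\res$ handles the residual block.

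For sub-claim (ii), suppose $x_1, x_2 < y_1, y_2$ in $D$ with all four elements in $D$. Since $D$ is a subset of $L'$ with the inherited order, these inequalities hold in $L'$, so by the lattice property of $L'$ there is an intermediate $z \in L'$ with $x_1, x_2 \le z \le y_1, y_2$. Because $y_1 \in D = \down h \cap L'$ we have $y_1 \le h$, so $z \le y_1 \le h$, which together with $z \in L'$ gives $z \in \down h \cap L' = D$. Thus the intermediate element stays inside $D$, proving $D$ satisfies the lattice property; and distinct meets/joins cannot appear since they don't in $L'$.

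Sub-claim (i) is the one that needs slightly more care, and I expect it to be the main obstacle, because deleting an arbitrary set of nodes can destroy the lattice property in general. The saving grace is that we only ever delete a \emph{downset}: at each step we remove $\down h$ for a minimal fat node $h$ of the current lattice. So I would prove the cleaner statement: if $M$ is a partial lattice and $I \subseteq M$ is a downset (i.e. $x \in I$ and $y \le x$ implies $y \in I$), then $M \setminus I$ is a partial lattice. The argument mirrors sub-claim (ii): given $x_1, x_2 < y_1, y_2$ all in $M \setminus I$, the lattice property of $M$ yields an intermediate $z \in M$ with $x_1 \le z \le y_1$; if $z \in I$ then $x_1 \le z \in I$ forces $x_1 \in I$ (since $I$ is a downset), contradicting $x_1 \notin I$. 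Hence $z \in M \setminus I$, and uniqueness of meets/joins is again inherited. Applying this repeatedly down the sequence of removed downsets, and noting that the set removed before block $B_i$ is a downset of the original lattice restricted appropriately at each stage, gives that every intermediate lattice in the decomposition is a partial lattice; combined with (ii) this shows every principal block is a partial lattice, and one more application of (i) gives the same for $B_\res$.
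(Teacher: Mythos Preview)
Your proposal is correct and follows essentially the same approach as the paper: the paper's proof also reduces to the two facts that (1) the downset of an element in a partial lattice is a partial lattice, and (2) removing such a downset leaves a partial lattice, with the same verification of the lattice property in each case. Your formulation of sub-claim~(i) for an arbitrary downset $I$ is a mild generalization of the paper's version (which removes $\down h$ specifically), but the argument is identical.
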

\begin{proof}
The lemma follows from two facts.
\be
\item The downset of any element in a partial lattice is also a partial lattice.
\item If the downset of an element is removed from a partial lattice, then the remaining elements still form a partial lattice.
\ee
We prove the first fact.
Let $h$ be an element of a partial lattice $L$.
We prove that the poset $\down h$ satisfies the lattice property (see Figure~\ref{fig:butterfly}).
Suppose there are four elements $x_1, x_2, y_1, y_2 \in \down h$ such that $x_1, x_2 < y_1, y_2$.
These elements also lie in $L$, and since $L$ is a lattice there must be an element $z \in L$ such that $x_1, x_2 \le z \le y_1, y_2$.
As $z \le y_1 \le h$, $z$ must lie in $\down h$.
Thus $\down h$ is a partial lattice because it satisfies the lattice property.

The second fact is similar.
Suppose $\down h$ is removed from a partial lattice $L$.
If there are four elements $x_1, x_2, y_1, y_2 \in L \setminus \down h$ with $x_1, x_2 < y_1, y_2$, then there must be an element $z \in L$ with $x_1, x_2 \le z \le y_1, y_2$.
This element $z$ cannot lie in $\down h$ because $x_1 \le z$ and $x_1 \not\in \down h$.
Therefore $z \in L \setminus \down h$.
\end{proof}

\begin{remark}
\emph{To avoid confusion in our notation, all lattice relations and operators are assumed to be with respect to $L$.
In particular, $\meet$, $\join$, $\up$, and $\down$ always reference the full lattice and are not restricted to a single block.}
\end{remark}

\subsection*{Intuition for Block Decompositions}

We can now explain intuitively why a block decomposition is a good idea and how it leads to an effective data structure.
Lemma~\ref{lem:blocksAreLattices} means that the blocks can be treated as independent partial lattices.
Moreover, the elements within each block are all thin, with the noteworthy exception of the block headers.
For any single block, this thinness condition makes it possible to create a fast, simple, space-efficient data structure that facilitates computations within that block.
However, such a data structure only contains local information about its block; it cannot handle operations that span multiple blocks.

For those operations, we rely upon the block headers to bridge the gaps.
The block headers are significant because they induce a \emph{unique representative property} on the blocks:
If $h$ is the block header of some principal block $B$ and $x$ is some element of the lattice, then we think of $x \meet h$ as the representative of $x$ in block $B$.
For all of the operations that we care about, the representative of $x$ in $B$ faithfully serves the role of $x$ during computations within $B$.

Combining the power of the unique representative property with our ability to quickly perform block-local operations gives us an effective data structure for lattices, which we are now prepared to describe.

\section{A Data Structure for Order Testing}
\label{sec:order}
First, we describe a simple data structure that performs order-testing queries (answers ``Is $x \le y$?'') in constant time.
We later extend it to handle meet and join queries as well.

Given a partial lattice $L$ with $n$ elements, we perform a block decomposition on $L$ using the block size $k=\sqrt n$.
Let $B_1, B_2, \dots, B_m$, and $B_\res$ be the blocks of this decomposition and $h_1, \dots, h_m$ be the block headers.
Note that $m \leq \sqrt n$.

\subsection*{Information Stored}

We represent each element of $L$ by a node with two fields.\footnote{We often use the term ``node'' to refer to the element of $L$ that the node represents.}
One field contains a unique \emph{identifier} for the lattice element, a number between $0$ and $n-1$, for indexing purposes.
The other field indicates the block that the element belongs to.

Our data structure consists of ($\cA$), a collection of arrays, and ($\cB$), a collection of dictionaries.

\be
\item[($\cA$)] For each block header $h_i$, we store an array containing a pointer to the node $h_i \meet x$ for each $x \in L$. The meet of any node with any block header can be found with one access to the appropriate array.

\item[($\cB$)] For each $x \in L$ we store a dictionary $\DOWN(x)$ containing the identifiers of all the nodes in the local downset of $x$. By using a space-efficient static dictionary (e.g.~\cite{BM99}), membership queries can be performed in constant time.
With this, we can test the order between any two nodes in the same block in constant time.
\ee



\subsection*{Testing Whether $x \le y$}

Given nodes $x$ and $y$ in $L$, we can test whether $x \le y$ in three cases.
\be[{Case} 1:]
\item
If $x$ is in a principal block $B_i$, then find $y_i = h_i \meet y$ using ($\cA$).
If $y_i \in B_i$, then $x \le y$ if and only if $x$ is a member of $\DOWN(y_i)$; this can be tested using ($\cB$).
If $y_i \not\in B_i$, then $x \not\le y$.
\item
If $x \in B_\res$ and $y \in B_\res$, then $x \le y$ if and only if $x$ is a member of $\DOWN(y)$.
\item
If $x \in B_\res$ and $y \not\in B_\res$, then $x \not\le y$.
\ee

The three cases can be tested in constant time using ($\cA$) and ($\cB$).

\begin{proposition}
The above method correctly answers order queries.
\end{proposition}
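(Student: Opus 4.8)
The plan is to verify correctness by checking each of the three cases separately, showing that the method's answer agrees with the true order relation $x \le y$ in $L$. The only case requiring real work is Case 1; Cases 2 and 3 follow quickly from the structure of the block decomposition.

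First I would dispatch the easy cases. In Case 3, $x \in B_\res$ and $y \notin B_\res$, so $y$ lies in some principal block and is therefore below (or equal to) its block header $h_i$. If we had $x \le y$, then $x \le y \le h_i$, so $x \in \down h_i$; but $\down h_i$ was entirely removed from the lattice when block $B_i$ was formed, contradicting $x \in B_\res$ (which consists of elements not below any block header). Hence $x \not\le y$, as the method reports. Case 2 is immediate: when $x, y \in B_\res$, the local downset of $y$ is $\down y \cap B_\res$, and since $x \in B_\res$ we have $x \in \down y$ if and only if $x \in \DOWN(y)$, so the membership test in ($\cB$) is exactly the order test.

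The crux is Case 1, where $x \in B_i$ for a principal block $B_i$ with header $h_i$. Set $y_i = h_i \meet y$. The key claim is that $x \le y$ if and only if $y_i \in B_i$ and $x \in \DOWN(y_i)$. For the forward direction: if $x \le y$ then, since also $x \le h_i$ (as $x \in B_i \subseteq \down h_i$), the property of meets gives $x \le h_i \meet y = y_i$; in particular $y_i \ne \NULL$ and $y_i \ge x$. Moreover $y_i \le h_i$, so $y_i \in \down h_i$, and since $\down h_i = B_i$ when $B_i$ was created --- but here I must be careful, because $\down h_i$ references the full lattice $L$, whereas $B_i$ was $\down h_i$ at the moment of creation. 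I would argue that $\down h_i \cap L$ equals $B_i$ as sets of lattice elements regardless, since once $B_i = \down h_i$ is removed no later-created block can contain an element below $h_i$ (later headers are minimal fat nodes in the remaining lattice, and $\down h_j$ for a later header is disjoint from the already-removed $\down h_i$), and $B_\res$ contains nothing below any header; so $\down h_i = B_i$ in $L$. Thus $y_i \in B_i$, and $x \le y_i$ with both in $B_i$ means $x \in \DOWN(y_i)$. For the reverse direction: if $y_i \in B_i$ and $x \in \DOWN(y_i)$, then $x \le y_i = h_i \meet y \le y$, so $x \le y$. Finally, if $y_i \notin B_i$: by the forward direction, $x \le y$ would force $y_i \in B_i$, so $y_i \notin B_i$ implies $x \not\le y$, matching the method.

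The main obstacle is the subtle point just flagged: reconciling the ``snapshot'' definition of $B_i$ as $\down h_i$ taken during the algorithm with the global interpretation of $\down$ and $\meet$ with respect to all of $L$ (as stipulated in the Remark). I would isolate this as a short preliminary observation --- namely that for each principal block $B_i$ with header $h_i$, we have $B_i = \down h_i$ computed in $L$ --- proved by induction on the order in which blocks are removed, using that the removed downsets are pairwise disjoint and that $h_i$ being a minimal fat node at its removal time means every element strictly below it was already thin and hence, in fact, every element of $\down h_i$ in $L$ was still present when $B_i$ was formed. Once that observation is in hand, all three cases close cleanly.
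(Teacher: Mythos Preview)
Your overall structure is sound, and Cases 2 and 3 are handled correctly. However, your Case~1 argument hinges on a claim that is false: it is \emph{not} generally true that $B_i = \down h_i$ computed in the full lattice $L$. For a counterexample, take any lattice in which two incomparable block headers $h_1, h_2$ have a common lower bound $z$ (for instance, the bottom element of a complete lattice). Then $z \in \down h_1$, so $z \in B_1$; but also $z \in \down h_2$ in $L$, yet $z \notin B_2$ because it was already removed when $B_2$ was formed. Your proposed induction does not rule this out: the fact that $h_i$ is a minimal fat node at its removal time only tells you that the elements below $h_i$ \emph{that are still present} are thin---it says nothing about elements of $\down h_i$ (in $L$) that were swept into earlier blocks.

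The fix, which is exactly what the paper does, is to prove the weaker statement that actually suffices: if $x \in B_i$ and $x \le z \le h_i$, then $z \in B_i$. (Reason: if $z$ were in some $B_j$ with $j < i$, then $z \le h_j$, hence $x \le h_j$; since $x$ was still present at step $j$---it lies in $B_i$ with $i>j$---it would have been placed in $B_j$, a contradiction.) With this in hand your forward direction in Case~1 closes cleanly: from $x \le y$ you obtain $x \le y_i \le h_i$, and now the weaker statement forces $y_i \in B_i$; the membership test $x \in \DOWN(y_i)$ then gives the answer. Note that the crucial ingredient is not merely $y_i \le h_i$ but the sandwich $x \le y_i \le h_i$ together with $x \in B_i$.
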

\begin{proof}
Clearly the three cases cover all possibilities for $x$ and $y$.

In Case 1, $y_i = h_i \meet y$ has the property that $x \le y$ if and only if $x \le y_i$.
This property holds because $x \le h_i$ by assumption, and by the definition of meet, \[x \le h_i \meet y \text{ if and only if } x \le h_i \text{ and } x \le y.\]
If $y_i \in B_i$, then the order can be tested directly using $\DOWN(y_i)$.
If $y_i \not\in B_i$, then $y_i$ cannot be above $x$ in the lattice because $y_i \le h_i$ and every element between $x$ and $h_i$ must lie in $B_i$.

Case 2 is checked directly using ($\cB$).

Case 3 is correct because $B_\res$ consists of all elements that are not below any block header.
As $y$ is in some principal block, it must lie below some block header. Hence, $x$ cannot be below $y$.
\end{proof}

\subsection*{Space Complexity}

Storing the $n$ nodes of the lattice requires $\Theta(n)$ space.
Each array of ($\cA$) requires $\Theta(n)$ space and there are at most $\sqrt{n}$ block headers, yielding $O(n^{3/2})$ space in total.

Assuming ($\cB$) uses a succinct static dictionary (see~\cite{BM99}), the space usage for ($\cB$) will be proportional to the sum of $|\down x \cap B_x|$ over all $x \in L$, where $B_x$ is the block containing $x$.
If $x$ is not a block header, then $|\down x \cap B_x| < \sqrt n$ because the local downsets must be smaller than the block size of the decomposition.
There are $n-m$ such elements, as $m$ denotes the number of principal blocks.
If $x$ is a block header, then $\down x \cap B_x = B_x$.
Thus
\[\sum_{x \in L}|\down x \cap B_x| \leq (n-m)\sqrt n + \sum_{i=1}^m |B_i| \leq (n-m)\sqrt n + n \leq 2n^{3/2}.\]

The total space for the data structure is therefore $O(n^{3/2})$.

\section{Finding Meets and Joins}
\label{sec:meet}

We now extend the order-testing data structure of the last section to answer meet queries: Given two elements $x$ and $y$ in $L$, we wish to find $x \meet y$. Our data structure can answer these queries in $O(n^{3/4})$ time.


\subsection*{Subblock Decompositions}

Let $B_i$ be a principal block with block header $h_i$.
A \emph{subblock decomposition} of $B_i$ is simply a block decomposition of $B_i \setminus \{h_i\}$.

To state it explicitly, the subblock decomposition is a partition of $B_i \setminus {h_i}$ into a set of principal subblocks $\{S_{i,1}, S_{i, 2}, \dots, S_{i,\ell_i}\}$, each having a subblock header $g_{i,j}$, and one residual subblock $S_{i,\res}$.
The decomposition strategy is identical to that of a block decomposition, and it still depends on a \emph{subblock size $r$} that we specify.

We exclude $h_i$ from the subblock decomposition as a convenience. We want to use the property that the local downsets of the elements in $B_i$ have size less than $\sqrt n$, and this holds for every element of $B_i$ except for $h_i$.

Obviously, the subblocks have the same properties as blocks.
\bi
\item Each principal subblock $S_{i,j}$ is a subset of $B_i$ with $|S_{i,j}| \geq r$. Hence, $\ell_i \leq \frac{|B_i|}{r}$.
\item If $x \in S_{i,j} \setminus \{g_{i,j}\}$ then $|\down x \cap S_{i,j}| < r$.
\item If $x \in S_{i,\res}$ then $|\down x \cap S_{i,\res}| < r$.
\item Each subblock is a partial lattice.
\ei


\subsection*{Extending the Data Structure}

As before, let $B_1, B_2, \dots, B_m$, and $B_\res$ be the blocks of the decomposition of $L$, each having size at least $\sqrt n$.
Within each principal block $B_i$, we perform a subblock decomposition with subblock size $r = \sqrt{|B_i|}$, yielding subblocks $S_{i,1}, S_{i, 2}, \dots, S_{i,\ell_i}$, and $S_{i,\res}$.
We have $\ell_i \leq \sqrt{|B_i|}$ for $1 \leq i \leq m$.
There is a subblock header $g_{i,j}$ for each principal subblock $S_{i,j}$, $1 \leq i \leq m$ and $1 \leq j \leq \ell_i$.

\subsection*{Information Stored}


We add a new field to each node that indicates which subblock contains it.
We store ($\cA$) and ($\cB$) as in the order-testing structure, and additionally:

\bi
\item[($\cC$)] For each subblock header $g_{i, j}$, we store an array containing a pointer to $g_{i,j} \meet x$ for all $x \in B_i$. These arrays allow us to determine the meet of any subblock header and any node in the same block with a single access.

\item[($\cD$)] For each principal subblock $S_{i,j}$, we store a table that contains the meet of each pair of elements from $S_{i,j}$, unless the meet lies outside $S_{i,j}$.
That is, the table has $|S_{i,j}|^2$ entries indexed by pairs of elements in $S_{i,j}$. The entry for $(x, y)$ contains a pointer to $x \meet y$ if it lies in $S_{i,j}$, or $\NULL$ otherwise.
We can compute meets within any principal subblock in constant time using these tables.

\item[($\cE$)] For every element $x$ in a residual subblock $S_{i,\res}$, we store $\down x \cap S_{i,\res}$ as a linked list of pointers. This allows us to iterate through the local downset of each element in the residual subblock.
\ei

\subsection*{Finding the Meet}

This data structure allows us to find the meet of two elements $x, y \in L$ in $O(n^{3/4})$ time.
The meet-finding operation works by finding representative elements for $x$ and $y$ in each principal block and computing the meet of each pair of representatives.
We call these \emph{candidate meets} for $x$ and $y$.
Once the set of candidate meets is compiled, the algorithm finds the largest element among them (with respect to the lattice order) and returns it.

We refer to the algorithm as \textproc{Meet}.
This algorithm uses a subroutine called \textproc{Meet-In-Block} that finds the meet of two elements from the same principal block, or else determines that the meet does not lie within that block.
The subroutine is similar to the main procedure except that it works on the subblock level instead of the block level.


\vspace{0.1in}\hrule\vspace{0.1in}
\noin\textproc{Meet:} Given $x, y \in L$, find $x \meet y$.
\be[(a)]
\item
Initialize an empty set $Z$ to store candidate meets for $x$ and $y$.
\item
\emph{Check principal blocks:} For each principal block $B_i$, find the representative elements $x_i = x \meet h_i$ and $y_i = y \meet h_i$ using ($\cA$).
If $x_i \in B_i$ and $y_i \in B_i$, then use the subroutine \textproc{Meet-In-Block} to either find $x_i \meet y_i$ or determine that $B_i$ does not contain it.
If $x_i \meet y_i$ is found, then add it to $Z$.
\item
\emph{Check residual block:}
If $x$ and $y$ are both in the residual block $B_\res$, then use $\DOWN(x)$ to iterate through every element $z \in \down x \cap B_\res$. Add $z$ to $Z$ whenever $z \le y$.
\item
Using the order-testing operation, determine the maximum element in $Z$ and return it.
If $Z$ is empty, then conclude that the meet of $x$ and $y$ does not exist and return $\NULL$.
\ee

\noin\textproc{Meet-In-Block:} Given $x_i, y_i \in B_i$, either find $x_i \meet y_i \in B_i$ or determine that $x_i \meet y_i \not\in B_i$.
\be[(a)]
\item If $x_i = h_i$ or $y_i = h_i$, then return the smaller of $x_i$ and $y_i$. Otherwise, initialize an empty set $Z_i$ to store candidate meets for $x_i$ and $y_i$ in $B_i$.
\item \emph{Check principal subblocks:} For each principal subblock $S_{i,j}$, find the representative elements $x_{i,j} = x_i \meet g_{i,j}$ and $y_{i,j} = y_i \meet g_{i,j}$ using ($\cC$).
If $x_{i,j}$ and $y_{i,j}$ are both in $S_{i,j}$, then look up
\[z_{i,j} = \begin{cases}x_{i,j} \meet y_{i,j} &\text{if $x_{i,j} \meet y_{i,j} \in S_{i,j}$}\\ \quad\NULL &\text{otherwise}\end{cases}\]
using the appropriate table in ($\cD$). If $z_{i,j} \not= \NULL$ then add it to $Z_i$.

\item \emph{Check residual subblock:} If $x_i$ and $y_i$ are both in the residual subblock $S_{i,\res}$, then use ($\cE$) to iterate through every element $z \in \down x_i \cap S_{i,\res}$. Add $z$ to $Z_i$ whenever $z \le y_i$.
\item 
Using the order-testing operation, determine the largest node in $Z_i$ and return it.
If $Z_i$ is empty, then conclude that $x_i \meet y_i \not\in B_i$ and return $\NULL$.
\ee
\hrule\vspace{0.1in}

\subsection*{Correctness}

We now prove that this algorithm is correct, beginning with the correctness of \textproc{Meet-In-Block}.

\begin{lemma}
	\label{lem:subcorrectness}
	\textproc{Meet-In-Block} returns $x_i \meet y_i$ if it lies in $B_i$ and $\NULL$ otherwise.
\end{lemma}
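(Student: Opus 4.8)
The plan is to show two things: (1) every element that \textproc{Meet-In-Block} adds to $Z_i$ is a lower bound of both $x_i$ and $y_i$ (so the maximum of $Z_i$, if nonempty, is at most $x_i\meet y_i$), and (2) if $x_i\meet y_i$ lies in $B_i$, then it actually appears in $Z_i$ (so the maximum of $Z_i$ equals it). Together with the easy observation that the largest element of a set of lower bounds of $x_i\meet y_i$ that contains $x_i\meet y_i$ itself must be $x_i\meet y_i$, this gives the lemma. The degenerate case in step (a) where $x_i=h_i$ or $y_i=h_i$ is immediate: if say $x_i=h_i$ then $x_i\meet y_i=y_i$ by the elementary property that $z\le w$ implies $z\meet w=z$ (here $y_i\le h_i$ since $y_i\in B_i=\down h_i\cap(\text{lattice at that stage})$), and $y_i\in B_i$ trivially, so returning the smaller of $x_i,y_i$ is correct.

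\textbf{Soundness of $Z_i$.} First I would argue that the subblock decomposition of $B_i\setminus\{h_i\}$ has all the structural properties inherited from Lemma~\ref{lem:blocksAreLattices} and the listed block-decomposition properties — in particular each $S_{i,j}$ is a partial lattice and the subblock headers $g_{i,j}$ induce the same unique-representative behaviour as block headers do. For each principal subblock, using the same reasoning as in the order-testing correctness proof: since $x_{i,j}=x_i\meet g_{i,j}\le x_i$ and similarly $y_{i,j}\le y_i$, any $z_{i,j}=x_{i,j}\meet y_{i,j}$ is $\le x_i$ and $\le y_i$, hence $z_{i,j}\le x_i\meet y_i$. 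For the residual subblock, each $z$ added satisfies $z\le x_i$ (it is in $\down x_i$) and $z\le y_i$ (checked explicitly). So every element of $Z_i$ is a common lower bound of $x_i,y_i$, hence $\le x_i\meet y_i$ by the definition of meet in $L$.

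\textbf{Completeness of $Z_i$.} This is the main obstacle and the part that needs the unique-representative idea carefully. Suppose $w:=x_i\meet y_i\in B_i$; I must show $w\in Z_i$. Since $w\le x_i\le h_i$ but also, if $x_i\neq h_i$ and $y_i\neq h_i$, then $w< h_i$, so $w$ lies in $B_i\setminus\{h_i\}$ and hence in some subblock. Case A: $w$ lies in a principal subblock $S_{i,j}$ with header $g_{i,j}$. The key claim is that the representatives $x_{i,j}=x_i\meet g_{i,j}$ and $y_{i,j}=y_i\meet g_{i,j}$ both lie in $S_{i,j}$ and satisfy $x_{i,j}\meet y_{i,j}=w$. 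To see $x_{i,j}\in S_{i,j}$: we have $w\le x_i$ and $w\le g_{i,j}$ (since $w\in S_{i,j}=\down g_{i,j}\cap\cdots$), so $w\le x_i\meet g_{i,j}=x_{i,j}$; also $x_{i,j}\le g_{i,j}$, and every element of $L$ lying between a node of $S_{i,j}$ and $g_{i,j}$ must itself be in $S_{i,j}$ (by the construction of the subblock as a downset that was removed wholesale), so $w\le x_{i,j}\le g_{i,j}$ forces $x_{i,j}\in S_{i,j}$; symmetrically $y_{i,j}\in S_{i,j}$. Now $x_{i,j}\meet y_{i,j}\le x_i\meet y_i=w$, while $w\le x_{i,j}$ and $w\le y_{i,j}$ give $w\le x_{i,j}\meet y_{i,j}$, so equality holds; since this meet equals $w\in S_{i,j}$, the table in ($\cD$) returns it and it is added to $Z_i$. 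Case B: $w$ lies in the residual subblock $S_{i,\res}$. Then I must check that $x_i$ and $y_i$ are themselves both in $S_{i,\res}$, which is what step (c) requires before iterating. Here I would use that $w\le x_i\le h_i$: since $w\in S_{i,\res}$ and $w\le x_i$, and $x_i\in B_i\setminus\{h_i\}$, if $x_i$ were in a principal subblock $S_{i,j}$ then $w\in\down x_i\subseteq\down g_{i,j}$ would put $w\in S_{i,j}$, contradicting $w\in S_{i,\res}$ (the subblocks partition $B_i\setminus\{h_i\}$). So $x_i\in S_{i,\res}$, and likewise $y_i\in S_{i,\res}$; then $w\in\down x_i\cap S_{i,\res}$ is enumerated in step (c), and since $w\le y_i$ it is added to $Z_i$. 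In either case $w\in Z_i$, and combined with soundness, the maximum of $Z_i$ is exactly $x_i\meet y_i$. Finally, if $x_i\meet y_i\notin B_i$, soundness shows every candidate is $\le x_i\meet y_i$ but in fact no candidate can equal a node outside $B_i$ (all candidates are in $B_i$), and one checks that no strictly smaller element can sneak in as the maximum — more precisely, if $Z_i$ were nonempty its maximum $z^*$ would be a common lower bound, but then by the lattice property / minimality there is no obstruction, so one argues instead that completeness fails only when the meet is outside $B_i$, hence $Z_i=\emptyset$ and $\NULL$ is returned; I would phrase this contrapositively from the completeness argument above (if $Z_i\ne\emptyset$ with max $z^*$, rerun the Case A/B analysis to conclude $z^*=x_i\meet y_i\in B_i$).
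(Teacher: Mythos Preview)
Your approach is essentially the paper's: the same two-fact structure (soundness of $Z_i$, completeness of $Z_i$), the same case split on whether $x_i\meet y_i$ lands in a principal or residual subblock, and the same algebraic identity $x_i\meet y_i=(x_i\meet g_{i,j})\meet(y_i\meet g_{i,j})$ in Case~A. You are in fact more careful than the paper in one place: you explicitly verify that $x_{i,j},y_{i,j}\in S_{i,j}$ so that the table lookup in ($\cD$) is actually triggered, whereas the paper leaves this implicit.

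The one part that does not work as written is your final paragraph handling the case $x_i\meet y_i\notin B_i$. Your proposed fix, ``rerun the Case~A/B analysis on the maximum $z^*$ of $Z_i$'', does not go through: Cases~A and~B start from the hypothesis that $w=x_i\meet y_i$ lies in $B_i$, so they cannot be invoked to deduce that hypothesis. The clean argument (which the paper compresses to ``by the first fact'') is the following upward-closure observation about blocks: if $z\in B_i$ and $z\le z'\le h_i$, then $z'\in B_i$, because otherwise $z'$ would lie in some earlier block $B_{i'}$, forcing $z\le z'\le h_{i'}$ and hence $z\in B_{i'}$. Applying this with $z\in Z_i\subseteq B_i$ and $z'=x_i\meet y_i\le h_i$ shows that a nonempty $Z_i$ forces $x_i\meet y_i\in B_i$; contrapositively, if $x_i\meet y_i\notin B_i$ (or does not exist) then $Z_i=\emptyset$. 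A minor related imprecision in your Case~B: ``$w\in\down g_{i,j}$ would put $w\in S_{i,j}$'' is not literally true (it could land in an earlier subblock), but the intended conclusion $w\notin S_{i,\res}$ is correct since $S_{i,\res}$ consists exactly of elements not below any subblock header.
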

\begin{proof}
	If $x_i = h_i$ or $y_i = h_i$, then $x_i \meet y_i$ is returned in step (i).
	Otherwise, the correctness of the algorithm relies on two facts.
	\be[{Fact} 1.]
	\item Every element $z \in Z_i$ satisfies $z \le x_i \meet y_i$.
	\item If $x_i \meet y_i$ exists and lies in $B_i$, then it is added to $Z$.
	\ee
	Assuming these hold, step (iv) must correctly answer the query:
	In the case that $x_i \meet y_i \in B_i$, the meet must be added to $Z_i$ and it must the maximum element among all elements in $Z_i$.
	If $x_i \meet y_i \not\in B_i$, then $Z_i$ will be empty by the first fact.
	
	Fact 1 is straightforward.
	Every candidate meet $z$ added to $Z_i$ in step (ii) is $x_{i,j} \meet y_{i,j}$ for some $j \in \{1, \dots, \ell_i\}$, as reported by ($\cD$).
	Since $x_{i,j} \le x_i$ and $y_{i,j} \le y_i$ we have $z \le x_i \meet y_i$.
	When a candidate meet $z$ is added to $Z$ in step (iii) it is because $z \in \down x_i \cap B_\res$ and $z \le y_i$; hence $z \le x_i \meet y_i$.
	
	To prove Fact 2, first suppose that $x_i \meet y_i$ lies in a principal subblock $S_{i,j}$.
	Then $x_i \meet y_i \le g_{i,j}$.
	By the elementary properties of the meet operation,
	\[x_i \meet y_i = x_i \meet y_i \meet g_{i,j} = (x_i \meet g_{i,j}) \meet (y_i \meet g_{i,j}) = x_{i,j} \meet y_{i,j}.\]
	Thus, $x_i \meet y_i$ is added to $Z$ during step (ii) when the subblock $S_{i,j}$ is considered.
	
	Now suppose that $x_i \meet y_i$ lies in the residual subblock $S_{i,\res}$.
	In this case, $x_i$ and $y_i$ must themselves lie in $S_{i,\res}$, for if either one is below any subblock header of $B_i$ then their meet would also be below that same block header.
	Thus, $x_i \meet y_i$ will be added to $Z_i$ in step (3) during which every element of $\down x_i \cap \down y_i \cap S_{i,\res}$ is added to $Z_i$.
	This proves Fact 2.
\end{proof}

\begin{lemma}
	\label{lem:correctness}
	\textproc{Meet} finds $x \meet y$ or correctly concludes that it does not exist.
\end{lemma}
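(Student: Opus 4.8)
The plan is to mirror the proof of Lemma~\ref{lem:subcorrectness} one level up, with blocks in place of subblocks and the order-testing primitive in place of \textproc{Meet-In-Block}. I would establish two facts about the set $Z$ built by \textproc{Meet}: (Fact 1) every $z \in Z$ satisfies $z \le x$ and $z \le y$; and (Fact 2) if $x \meet y$ exists, then $x \meet y \in Z$. Granting these, step (d) is correct: if $x \meet y$ exists it lies in $Z$ and, by Fact 1, dominates every other element of $Z$, so there is a unique maximum and the order-testing maximum computation returns it; if $x \meet y$ does not exist then $\down x \cap \down y = \emptyset$, so Fact 1 forces $Z = \emptyset$ and \NULL{} is returned, which is the correct answer.

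Fact 1 is immediate. An element added in step (b) has the form $x_i \meet y_i$ with $x_i = x \meet h_i \le x$ and $y_i = y \meet h_i \le y$, and it genuinely equals $x_i \meet y_i$ by Lemma~\ref{lem:subcorrectness}; hence it lies below both $x$ and $y$. An element added in step (c) is explicitly tested to lie in $\down x \cap \down y$.

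For Fact 2, suppose $x \meet y$ exists and split on which block contains it. If $x \meet y$ lies in a principal block $B_i$, then $x \meet y \le h_i$, so by idempotence, associativity, and commutativity of meet, $x \meet y = x \meet y \meet h_i = (x \meet h_i) \meet (y \meet h_i) = x_i \meet y_i$. Moreover $x \meet y \le x_i \le h_i$ and $x \meet y \in B_i$, and since every element lying between a member of $B_i$ and $h_i$ must itself lie in $B_i$ (the same property of block decompositions used in the order-testing analysis), we get $x_i \in B_i$, and likewise $y_i \in B_i$. Thus \textproc{Meet} invokes \textproc{Meet-In-Block}$(x_i, y_i)$, which by Lemma~\ref{lem:subcorrectness} returns $x_i \meet y_i = x \meet y$ since it lies in $B_i$, and this value is added to $Z$. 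If instead $x \meet y$ lies in the residual block $B_\res$, then $x$ and $y$ must themselves lie in $B_\res$: were $x$ in a principal block $B_j$ we would have $x \le h_j$, hence $x \meet y \le h_j$, contradicting that residual elements lie below no block header. With $x, y \in B_\res$, step (c) iterates over $\down x \cap B_\res$, which contains $x \meet y$, and adds $x \meet y$ to $Z$ because $x \meet y \le y$. This proves Fact 2.

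The main obstacle, such as it is, lies entirely in the bookkeeping of Fact 2: verifying that the representatives $x_i$ and $y_i$ really land inside $B_i$ so that \textproc{Meet-In-Block} is actually applied to them (this rests on the ``intervals up to $h_i$ stay inside $B_i$'' property of the block decomposition), and, in the residual case, arguing that $x$ and $y$ themselves — not merely their meet — fall in $B_\res$. Once these are in hand, everything else is a direct transcription of the subblock-level argument, and the join case follows by duality.
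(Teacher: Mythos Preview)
Your proposal is correct and follows essentially the same two-fact structure as the paper's proof, which explicitly defers to the argument for Lemma~\ref{lem:subcorrectness}. If anything, you are more careful than the paper: you explicitly verify that $x_i, y_i \in B_i$ (so that \textproc{Meet-In-Block} is actually invoked), a point the paper glosses over in both lemmas.
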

\begin{proof}
	This proof is similar to that of Lemma~\ref{lem:subcorrectness}.
	It relies on the same two facts.
	\be[{Fact} 1.]
	\item Every element $z \in Z$ satisfies $z \le x \meet y$.
	\item If $x \meet y$ exists, then it is added to $Z$.
	\ee
	Assuming these hold, step (4) must correctly answer the query.
	The only significant difference between \textproc{Meet} and \textproc{Meet-In-Block} is the method of finding candidate meets in step (2).
	\textproc{Meet} calls \textproc{Meet-In-Block} to find $x_i \meet y_i$ if it lies in $B_i$ whereas \textproc{Meet-In-Block} uses ($\cD$) to find $x_{i,j} \meet y_{i,j}$ if it lies in $S_{i,j}$.
	By Lemma~\ref{lem:subcorrectness}, \textproc{Meet-In-Block} accurately returns $x_i \meet y_i$ if $x_i \meet y_i \in B_i$ and $\NULL$ otherwise.
	Now Facts 1 and 2 may be proved by the same arguments.
	%
	%
\end{proof}

\subsection*{Time Analysis}

The meet procedure takes $O(n^{3/4})$ time in the worst case.
We first analyze the time for \textproc{Meet-In-Block} applied to a principal block $B_i$.
Step (i) takes constant time.
Step (ii) takes constant time per principal subblock of $B_i$ using ($\cC$) and ($\cD$).
Since each principal subblock has size at least $\sqrt{|B_i|}$, there are at most $|B_i|/\sqrt{|B_i|}=\sqrt{|B_i|}$ principal subblocks; hence the time for step (ii) is $O(\sqrt{|B_i|})$.
Step (iii) performs constant-time order testing on all the elements below $x_i$ in the residual subblock.
By the subblock decomposition method, there are at most $\sqrt{|B_i|}$ such elements.

When step (iv) is reached, $Z_i$ has been populated with at most one element per principal subblock ($\sqrt{|B_i|}$ in total) and at most $\sqrt{|B_i|}$ elements from the residual sublock.
The maximum element in $Z_i$ is found in linear time during this step.
Thus, \textproc{Meet-In-Block} runs in $O(\sqrt{|B_i|})$ time when applied to block $B_i$.

Now the main procedure can be analyzed in a similar fashion.
Step (1) takes constant time.
Step (2) calls \textproc{Meet-In-Block} on every principal block, and hence the total time for step (2) is proportional to $\sum_{i=1}^m \sqrt{|B_i|}$.
By Jensen's inequality, $\sum_{i=1}^m \sqrt{|B_i|}$ is maximized when all the blocks have size $\sqrt n$, since each principal block has size at least $\sqrt n$ and $\sum_{i=1}^m |B_i| \leq n$.
Thus \[\sum_{i=1}^m \sqrt{|B_i|} \leq  \sum_{i=1}^{\sqrt n} n^{1/4} \leq n^{3/4}.\]

As in the analysis of steps (iii) and (iv), steps (3) and (4) take $O(\sqrt n)$ time.
Therefore the time complexity of \textproc{Meet} is $O(n^{3/4})$.

\subsection*{Space Complexity}

The space required to store the nodes, ($\cA$), and ($\cB$) is $O(n^{3/2})$ as in Section~\ref{sec:order}.

Fix $i \in \{1, \dots, m\}$. We show that the parts of ($\cC$), ($\cD$), and ($\cE$) relating to $B_i$ occupy $O(|B_i| \sqrt n)$ space.
Since $\sum_{i=1}^m |B_i| \leq n$, it follows that the entire data structure takes $O(n^{3/2})$ space.

Each array in ($\cC$) requires $O(|B_i|)$ space.
There are at most $\sqrt{|B_i|}$ subblock headers for a total of $O(|B_i|^{3/2})$ space.

The lookup table in ($\cD$) for subblock $S_{i,j}$ takes $O(|S_{i,j}|^2)$ space. Since $\sqrt{|B_i|} \leq |S_{i,j}| \leq \sqrt n$, we have $\sum_{j=1}^{\ell_i}|S_{i,j}|^2 \leq \sqrt n \sum_{j=1}^{\ell_i}|S_{i,j}|$.
Notice $\sum_{j=1}^{\ell_i} |S_{i,j}| \leq |B_i|$ as the subblocks are disjoint subsets of $B_i$.
Therefore the total space occupied by ($\cD$) is $O(|B_i| \sqrt n)$.

The lists stored by ($\cE$) occupy $O(\sqrt{|B_i|})$ space each for a total of $O(|B_i|^{3/2})$ space.
The space charged to block $B_i$ is therefore $O(|B_i|^{3/2} + |B_i| \sqrt n + |B_i|^{3/2}) = O(|B_i| \sqrt n)$.


\subsection*{Preprocessing}
It remains to discuss how to efficiently decompose the lattice and initialize the structures ($\cA$) --- ($\cE$).
Recall that we the lattice is presented initially by its transitive reduction graph.
It is known that the number of edges in the TRG of a lattice is $O(n^{3/2})$~\cite{KL71, Win79}.
We assume that the TRG is stored as a set of $n$ nodes, each with a list of its out-neighbours (nodes that cover it) and a list of in-neighbours (nodes that it covers).
The total space needed for this representation is $O(n^{3/2})$.
The preprocessing takes $O(n^2)$ time and the space usage never exceeds $O(n^{3/2})$.

The first step in preprocessing is to determine the block decomposition.
The same technique will apply to subblock decompositions.
We begin by computing a \emph{linear extension} of the lattice.
A linear extension of a partially-ordered set is an order of the elements $x_1, x_2, \dots, x_n$ such that if $i \leq j$ then $x_j \not\le x_i$. A linear extension may be found by performing a topological sort on the TRG, which can be done in $O(n^{3/2})$ time~\cite{K62}.

We now visit each element of $L$ in the order of this linear extension and determine the size of its downset.
The size of the downset can be computed by a depth-first search beginning with the element and following edges descending the lattice.
This search takes time proportional to the number of edges between elements in the downset.
As soon as this process discovers a fat node $h$ (a node with at least $\sqrt n$ elements in its downset), it can be used as a block header.
Then $h$ and every element of its downset can be deleted from $L$.
The process of computing the sizes of the downsets can continue from the node following $h$ in the linear extension, and the only difference is that the graph searches used to compute the size of each downset must now be restricted to $L \setminus \down h$.
There is no need to recompute the downset size of any node before $h$ in the linear extension because the size of its downset was less than $\sqrt n$ previously and deleting $\down h$ can only reduce this value.
The fat nodes encountered in this way form the block headers of the decomposition.
After every node has been visited, the remaining elements can be put into the residual block.

The time needed for the decomposition depends on the number of edges in each downset.
By Lemma~\ref{lem:blocksAreLattices}, every downset is a partial lattice, and thus a downset with $k$ nodes can have only $O(k^{3/2})$ edges.
For every thin node encountered, the number of edges in the downset is at most $O((\sqrt n)^{3/2}) = O(n^{3/4})$ because it contains less than $\sqrt n$ elements.
Thus, the time needed to visit all the thin nodes is $O(n^{7/4})$.

Whenever a fat node is discovered its downset is removed immediately, and so the edges visited during the DFS are never visited again.
Hence, the time needed to examine all of the block headers is proportional to the number of edges in the whole TRG.
Therefore a block decomposition can be computed in $O(n^{7/4})$ time.

By the same procedure, the subblocks can be computed in $O(\sum_{i=1}^m |B_i|^{7/4})$ time.
Since $\sum_{i=1}^m |B_i| \leq n$, this is at most $O(n^{7/4})$.

With the block and subblock decompositions in hand, data structures ($\cA$) --- ($\cE$) can be initialized. See Appendix~\ref{sec:initialize} for details.

We have now proven the main theorem of this paper.

\begin{theorem}
	\label{thm:main}
	There is a data structure for lattices that requires $O(n^{3/2})$ space, answers order-testing queries in $O(1)$ time, and computes the meet or join of two elements in $O(n^{3/4})$ time. The preprocessing time starting from the transitive reduction graph of the lattice is $O(n^2)$.
\end{theorem}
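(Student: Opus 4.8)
The plan is to prove the theorem by assembling the components built up in Sections~\ref{sec:block}--\ref{sec:meet}, with two points that still need attention: join queries, and the $O(n^2)$ preprocessing bound. I would take the order-testing structure of Section~\ref{sec:order} together with the auxiliary structures $(\cA)$--$(\cE)$ of Section~\ref{sec:meet}, and additionally store a dual copy of the entire apparatus built on the lattice with its order reversed (equivalently, on the reversed transitive reduction graph). Since turning a lattice upside down interchanges meet and join, a \textproc{Join} query on $L$ is answered by running \textproc{Meet} on the dual copy; order testing needs no dual, as ``$x \ge y$'' is just ``$y \le x$''. Correctness of order testing is the Proposition of Section~\ref{sec:order}, correctness of \textproc{Meet} is Lemma~\ref{lem:correctness} (via Lemma~\ref{lem:subcorrectness}), and correctness of \textproc{Join} follows by duality. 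The query times are $O(1)$ for order testing and $O(n^{3/4})$ for \textproc{Meet} by the time analysis of Section~\ref{sec:meet}, hence also $O(n^{3/4})$ for \textproc{Join}. For space, Section~\ref{sec:meet} shows one copy occupies $O(n^{3/2})$ words, and the dual copy merely doubles this, so the total is $O(n^{3/2})$ words.

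For the preprocessing, the block and subblock decompositions are obtained in $O(n^{7/4})$ time as shown in Section~\ref{sec:meet}, so what remains is to populate $(\cA)$--$(\cE)$ within $O(n^2)$ time. The dominant part is array $(\cA)$, which I would build by dynamic programming over the transitive reduction graph. Fix a block header $h$ and process the elements $x$ in the order of a fixed linear extension, which also assigns each element a rank $\rho(x)$ that is strictly monotone under $\le$. If $x \le h$ then $h \meet x = x$; otherwise $h \meet x$ equals the maximum, in the lattice order, of $\{\, h \meet z : z \text{ a lower cover of } x,\ h \meet z \ne \NULL \,\}$, and this maximum exists and lies in the set (each $h \meet z \le h \meet x$, and one of them equals $h \meet x$); if the set is empty then $h \meet x = \NULL$. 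Crucially, the required maximum is the value $h \meet z$ of largest rank $\rho$ among the lower covers, so it is selected without any order comparison. Thus $(\cA)$ for one header costs $O(n)$ plus the number of transitive-reduction edges, i.e. $O(n^{3/2})$, and $O(n^2)$ over all $O(\sqrt n)$ headers. With $(\cA)$ and $(\cB)$ in place the order test runs in $O(1)$, and the remaining structures are then filled using it: $(\cB)$ from the downsets computed during the decomposition, $(\cC)$ and $(\cD)$ by analogous local computations within each block and subblock (using $\sum_i |B_i| \le n$ and $\sum_{i,j}|S_{i,j}|^2 = O(n^{3/2})$ to keep the totals at $O(n^2)$), and $(\cE)$ by linked-list traversals of the residual subblocks. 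Summing everything gives $O(n^2)$.

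I expect the initialization to be the main obstacle --- not the gross asymptotics, which have slack, but the dependency management: the $O(1)$ order test itself needs $(\cA)$, so $(\cA)$ must be built without it (hence the rank trick above), and one must account for the fact that the meet of two elements of a block or subblock need not lie in that block, so the local tables must be computed so as to record such ``escaping'' meets correctly. These bookkeeping details are routine but somewhat intricate, and I would relegate them to Appendix~\ref{sec:initialize}.
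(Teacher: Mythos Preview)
Your proposal is correct and follows the paper's overall approach: assemble the order-testing structure and $(\cA)$--$(\cE)$ from Sections~\ref{sec:order}--\ref{sec:meet}, invoke duality for joins, and quote the space and time analyses already established. The paper treats the theorem as a summary of what precedes it and defers the initialization details to Appendix~\ref{sec:initialize}, exactly as you anticipate.

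The one genuine difference is your algorithm for populating $(\cA)$. The paper does it by a reverse sweep: for a fixed header $h$ it computes $\down h$, lists its elements in a linear extension $y_1,\dots,y_k$, and then for $j=k,k-1,\dots,1$ performs an \emph{upset} DFS from $y_j$, writing $z \meet h = y_j$ for every newly visited $z$ and marking $z$ so it is not revisited. You instead propose a bottom-up dynamic program over the transitive reduction graph: process $x$ in linear-extension order, and when $x \not\le h$ set $h \meet x$ to the non-\NULL\ value $h \meet z$ of largest rank among the lower covers $z$ of $x$. Your correctness argument (all such $h\meet z$ lie below $h\meet x$, and one of them equals it, so the maximum rank picks it out without an order test) is sound, and the cost per header is the same $O(n+|E|)=O(n^{3/2})$, hence $O(n^2)$ overall. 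Both methods neatly sidestep the circularity you identify---that $(\cA)$ must be built before the $O(1)$ order test is available---the paper's by never comparing at all, yours by reducing comparison to a rank lookup. Either is fine; your DP is arguably the more transparent of the two.
\qed
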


A straightforward generalization of the data structure allows for a space-time tradeoff.
\begin{corollary}
	\label{cor:generalize}
	For any $c \in [\frac{1}{2}, 1]$, there is a data structure for lattices that requires $O(n^{1+c})$ space and computes the meet or join of two elements in $O(n^{1-c/2})$ time.
	The preprocessing time, starting from the transitive reduction graph of the lattice, is $O(n^2 + n^{1+3c/2})$.
\end{corollary}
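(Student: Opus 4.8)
The plan is to reuse the construction behind Theorem~\ref{thm:main} essentially verbatim, changing only the outer \emph{block size}: instead of $k=\sqrt n$, run the block decomposition of $L$ with $k=n^{c}$, keeping the subblock size of each principal block $B_i$ at $r_i=\sqrt{|B_i|}$. The fields stored on each node, the structures ($\cA$)--($\cE$), the procedures \textproc{Meet} and \textproc{Meet-In-Block}, and their correctness proofs (Lemmas~\ref{lem:subcorrectness} and~\ref{lem:correctness}) all carry over with no change, since none of those arguments refers to the numerical value of $k$. So the corollary amounts to redoing the space, query-time, and preprocessing accountings as functions of $c$. The one real wrinkle --- which I expect to be the main obstacle --- is the residual block: in Theorem~\ref{thm:main}, step (c) of \textproc{Meet} iterates over $\down x\cap B_\res$, which has size $<k$, and this is cheap when $k=\sqrt n$ but too slow when $k=n^{c}$ with $c>\tfrac23$. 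The remedy is to give $B_\res$ its own subblock decomposition as well (subblock size $\sqrt{|B_\res|}$, with no header to exclude), extend ($\cC$)--($\cE$) to its subblocks, and answer the residual case of \textproc{Meet} with a header-free copy of \textproc{Meet-In-Block}; this costs $O(\sqrt{|B_\res|})=O(\sqrt n)$, which is within $O(n^{1-c/2})$ exactly because $c\le 1$.

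For space: there are at most $n/k=n^{1-c}$ principal blocks, so ($\cA$) occupies $O(n^{1-c}\cdot n)=O(n^{2-c})$, which is $O(n^{1+c})$ precisely because $c\ge\tfrac12$ --- this is the inequality that pins the low end of the range. Non-header nodes have local downsets smaller than $k$, so ($\cB$) occupies $O(nk)=O(n^{1+c})$. Each principal subblock $S_{i,j}$ is contained in the local downset of a non-header node of $B_i$, so $|S_{i,j}|<k=n^{c}$; hence $\sum_j|S_{i,j}|^2\le n^{c}\sum_j|S_{i,j}|\le n^{c}|B_i|$, and summing over all blocks (including $B_\res$) bounds ($\cD$) by $O(n^{1+c})$, while ($\cC$) and ($\cE$), still governed by $r_i=\sqrt{|B_i|}$, total $O(n^{3/2})=O(n^{1+c})$. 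For query time: \textproc{Meet-In-Block} on $B_i$ still runs in $O(\sqrt{|B_i|})$ time since its subblock size is unchanged, so \textproc{Meet} runs in $O\bigl(\sqrt n+\sum_i\sqrt{|B_i|}\bigr)$; using $|B_i|\ge k$ gives $\sum_i\sqrt{|B_i|}=\sum_i|B_i|/\sqrt{|B_i|}\le n/\sqrt k=n^{1-c/2}$, which dominates.

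For preprocessing: rerunning the argument of Section~\ref{sec:meet}, a thin node now has a downset with fewer than $n^{c}$ nodes and hence, by Lemma~\ref{lem:blocksAreLattices}, only $O(n^{3c/2})$ edges, so the block decomposition costs $O(n\cdot n^{3c/2}+n^{3/2})=O(n^{1+3c/2})$ for $c\ge\tfrac12$; the subblock decompositions (including that of $B_\res$) still cost $O\bigl(\sum_i|B_i|^{7/4}\bigr)=O(n^{7/4})$ and are absorbed. Initializing ($\cA$)--($\cE$) proceeds as in Appendix~\ref{sec:initialize}, and a routine recount shows it still costs $O(n^{2})$: for instance, ($\cA$) is built by computing the meets of each of the $n^{1-c}$ block headers against all of $L$ via graph traversals proportional to the $O(n^{3/2})$ TRG edges, for $O(n^{5/2-c})=O(n^{2})$ in total. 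Summing yields $O(n^{2}+n^{1+3c/2})$. Joins are handled by duality --- either by a symmetric copy of the structure or by observing, as in the preliminaries, that flipping $L$ exchanges $\meet$ and $\join$. As a check, at $c=\tfrac12$ all of these estimates collapse to Theorem~\ref{thm:main} (in particular $n^{1+3c/2}=n^{7/4}\le n^{2}$, and no residual subdivision is needed), and at $c=1$ they describe an $O(n^{2})$-space structure with $O(\sqrt n)$-time meet and join.
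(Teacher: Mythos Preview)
Your proposal is correct and follows essentially the same approach as the paper's own proof, which simply states that one changes the block size from $\sqrt n$ to $n^c$ and redoes the analyses. In fact you are more careful than the paper: you explicitly identify and repair the residual-block bottleneck (step (3) of \textproc{Meet} taking $O(n^c)$ time, which exceeds $O(n^{1-c/2})$ once $c>2/3$) by subblock-decomposing $B_\res$ as well---a detail the paper's two-line proof glosses over.
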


\begin{proof}
	The modification is obtained by adjusting the block size of the initial decomposition from $\sqrt n$ to $n^c$. Otherwise, the data structure and methods are identical.
	The time, space, and preprocessing analyses are similar.
\end{proof}
Note that for $c = \frac{1}{2}$, this data structure is precisely that of Theorem~\ref{thm:main}.

\section{Degree-Bounded Extensions}
\label{sec:deg}
Recall that we assume that the lattice is initially represented by its transitive reduction graph (TRG).
Let the \emph{degree} of a lattice node be the number of in-neighbours in the TRG, or equivalently, the number of nodes it covers.
Interestingly, developing methods that handle high-degree nodes efficiently has been the primary obstacle to improving on our data structure.
Indeed, the ``dummy node'' technique of Talamo and Vocca, explained in Appendix~\ref{sec:error}, is effectively used to get around high-degree lattice elements.
We have found that meets and joins can be computed more efficiently as long as the maximum degree of any node in the lattice is not too large.
This is the case for distributive lattices, for example, as $\log_2 n$ is the maximum degree of a node in a distributive lattice\footnote{We leave this as an exercise using Birkhoff's Representation Theorem~\cite{Bir37}.}.
In this section, we explore new data structures for meet and join operations that perform well under this assumption.


Let $d$ be the maximum degree of any node in a partial lattice $L$.
As a convenience, we assume in this section that $L$ has a top element.
The purpose of this assumption is to avoid a lattice with more than $d$ maximal elements;
otherwise we would need to define $d$ as the larger of the maximum degree and the number of maximal elements in the lattice.

This assumption has the effect that the residual block in any block decomposition of $L$ has a top element (unless it is empty).
The only practical difference between the residual block and a principal block is that the residual block may be smaller than the block size of the decomposition.
The results of this section are easier to relate if we assume henceforth that all blocks are principal blocks and each has a block header.
Thus, a block decomposition with block size $k$ creates $m$ blocks $B_1, \dots, B_m$ with block headers $h_1, \dots, h_m$, where $|B_i| \geq k$ for $1 \leq i \leq m-1$.
The number of blocks is at most $\frac{n}{k}+1$.

We begin with a simple data structure that computes joins between elements using a new strategy.
It is more efficient than our earlier method when $d \leq n^{3/4}$.
We then generalize the idea to create a more sophisticated recursive data structure.
It improves on the simple structure for all values of $d$ and works especially well when $d \leq \sqrt n$.
The space usage is $O(n^{3/2})$ for both data structures.
Either one can be used to compute meets as well by inverting the lattice order and rebuilding the data structure, although the value of $d$ may be different in the flipped lattice.

\begin{theorem}
	\label{thm:joinfinding}
	There is a data structure for lattices that requires $O(n^{3/2})$ space and computes the join of two elements in $O(\sqrt n + d)$ time.
\end{theorem}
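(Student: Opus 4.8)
The plan is to combine the block decomposition of Section~\ref{sec:block} (with block size $\sqrt n$, giving at most $\sqrt n+1$ blocks $B_1,\dots,B_m$ with headers $h_1,\dots,h_m$) with the constant-time order-testing data structure of Section~\ref{sec:order}. Since $L$ has a top element, $x$ and $y$ always have a common upper bound and hence $z^{*}:=x\join y$ exists. The algorithm has two parts: first it locates the block $B_l$ that contains $z^{*}$ without knowing $z^{*}$, and then it descends to $z^{*}$ inside $B_l$ along edges of the transitive reduction graph.

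\emph{Locating the block.} Let $H$ be the set of block headers that lie above both $x$ and $y$. Any common upper bound of $x$ and $y$ lies in some block, and that block's header, being the maximum of its block, is also a common upper bound; hence $H\neq\emptyset$. I claim $B_l$ is the block whose header $h_l$ is processed \emph{earliest} among those in $H$. We have $h_l\in H$ because $h_l\ge z^{*}\ge x,y$. If some $h_j\in H$ were processed before $h_l$, then $z^{*}\le h_j$, and $z^{*}$ is still present when $B_j$ is formed (it is removed only when $B_l$ is formed), so $z^{*}$ would be placed in $B_j$ --- contradicting $z^{*}\in B_l$. Thus $l$ can be found in $O(\sqrt n)$ time: scan all $O(\sqrt n)$ headers, keep those above both $x$ and $y$ using two order tests each, and take the one whose processing index (recorded during preprocessing) is smallest.

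\emph{Descending to the join.} Starting from $c:=h_l$, repeatedly replace $c$ by any element that $c$ covers in the transitive reduction graph and that is still $\ge x$ and $\ge y$; stop when no such element exists. Any covered element that is $\ge x$ and $\ge y$ is also $\ge z^{*}$, so the invariant $z^{*}\le c\le h_l$ holds throughout, and every $c$ visited lies in the interval $[z^{*},h_l]$. Since $z^{*}\in B_l$ and every element lying between a block element and its block's header must lie in that block (the property established in the correctness proof for order testing), this interval is contained in $B_l$. Every visited node other than $h_l$ is therefore a non-header of $B_l$, so its local downset has fewer than $\sqrt n$ elements; since the visited non-headers form a strictly decreasing chain, there are fewer than $\sqrt n$ of them. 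When the descent halts at $c$, no element covered by $c$ is $\ge x$ and $\ge y$; if $c$ were strictly above $z^{*}$, then a co-atom of the interval $[z^{*},c]$ would be covered by $c$ in the transitive reduction graph and would be $\ge z^{*}$, a contradiction. Hence $c=z^{*}$. (The trivial cases $x\le y$ and $y\le x$ are handled directly, and meets are obtained by running the same structure on the order-dual of $L$.)

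\emph{Space and the main obstacle.} The data structure is the order-testing structure of Section~\ref{sec:order}, which uses $O(n^{3/2})$ space and answers order tests in $O(1)$ time, together with the stored transitive reduction graph (with in-neighbour lists) and a processing-order label on each header; the total space is $O(n^{3/2})$. Correctness follows from the two parts above, and the space analysis is routine. The delicate point is the running time of the descent. Examining, at each of the $O(\sqrt n)$ descent steps, all of the up to $d$ elements covered by the current node yields only an $O(\sqrt n\cdot d)$ bound. Obtaining the claimed $O(\sqrt n+d)$ requires showing that the whole descent performs only $O(\sqrt n+d)$ examinations of covered elements --- charging each examination either to the $O(\sqrt n)$ steps of the descent or to one of the $\le d$ elements covered by $z^{*}$, and choosing the descent edge at each node so that no covered element is inspected more than once. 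I expect this amortized analysis (and the precise rule for which covered element to inspect at each node) to be the main obstacle; everything else is a direct adaptation of the machinery of Sections~\ref{sec:block}--\ref{sec:order}.
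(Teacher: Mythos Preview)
Your block-locating step is correct and matches the paper exactly: scan the $O(\sqrt n)$ headers, and the earliest header above both $x$ and $y$ heads the block containing $x\join y$.

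The gap is in the descent. Your edge-by-edge walk down from $h_l$ is correct, but as you yourself note, it costs $O(\sqrt n\cdot d)$: at each of up to $\sqrt n$ steps you may inspect up to $d$ covered elements. The amortization you sketch --- charging inspections either to descent steps or to the $\le d$ elements covered by $z^{*}$ --- does not go through. The elements inspected at different levels of the descent are distinct and need not be related to the children of $z^{*}$ at all; there is no mechanism preventing $\Theta(\sqrt n\cdot d)$ distinct inspections. So this is not a matter of a missing charging argument; the algorithm as stated genuinely runs in $\Theta(\sqrt n\cdot d)$ in the worst case.

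The paper avoids the product by \emph{not} descending repeatedly. After locating $B_{i^*}$, it inspects the (at most $d$) elements $c_1,\dots,c_t\in B_{i^*}$ covered by $h_{i^*}$. If none of them is $\ge x$ and $\ge y$, then $x\join y=h_{i^*}$. Otherwise, pick any $c_j$ with $x,y\le c_j$. The crucial observation is that $c_j$ is a non-header, so by the thinness property $|\down c_j\cap B_{i^*}|<\sqrt n$. Since $x\join y\in\down c_j\cap B_{i^*}$, you can now brute-force scan this entire local downset in $O(\sqrt n)$ time, testing each element against $x$ and $y$ and returning the minimum one above both. Thus the $O(d)$ cost is paid exactly once (at the top level) and the $O(\sqrt n)$ cost is paid exactly once (scanning a single thin downset), giving $O(\sqrt n+d)$ additively rather than multiplicatively. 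Replacing your iterated descent with this one-level-then-scan trick closes the gap.
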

\begin{proof}
	This data structure uses a block decomposition with block size $k = \sqrt n$ and stores ($\cA$) and ($\cB$) just as in Section~\ref{sec:order}.
	This is everything we need to perform order-testing in constant time.
	However, we now use this information to compute \emph{joins} instead of meets.
	
	Let $B_1, \dots, B_m$ be the blocks of the decomposition with block headers $h_1, \dots, h_m$.
	Further assume that the order $B_1, B_2, \dots, B_m$ reflects the order that the blocks were extracted from $L$ during the decomposition.
	
	Given $x, y \in L$, $x \join y$ may be found as follows.
	\be[(1)]
	\item Use order-testing to compare $x$ and $y$ to every block header.
	Let $i^* \in \{1, \dots, m\}$ be the smallest value for which $x \le h_{i^*}$ and $y \le h_{i^*}$.
	\item
	It must be that $x \join y$ lies in $B_{i^*}$.
	Let $c_1, c_2, \dots, c_t \in B_{i^*}$ be the elements covered by $h_{i^*}$ in $B_{i^*}$\footnote{It is possible that $h_{i^*}$ covers other elements belonging to earlier blocks. These are not included.}.
	Compare $x$ and $y$ to each of these elements using order-testing queries.
	If $x, y \le c_j$ for some $j \in \{1, \dots, t\}$, then proceed to step (3).
	Otherwise, conclude that $x \join y = h_{i^*}$.
	\item
	The join of $x$ and $y$ must lie in the local downset of $c_j$.
	Find $x \join y$ by comparing $x$ and $y$ to every element in $\down c_j \cap B_{i^*}$ and choosing the smallest node $z$ with $x, y \leq z$.
	\ee
	This procedure always finds $x \join y$.
	The purpose of step (1) is to identify the block containing $x \join y$.
    With $i^*$ defined as in the algorithm, observe that $x \join y$ must have been added to $B_{i^*}$ during the decomposition because $x \join y \in \down h_{i^*}$ and $x \join y \not\in \down h_i$ for any $i < {i^*}$.
	This step takes $O(\sqrt n)$ time as $m \leq \sqrt n + 1$.
	
	Once $B_{i^*}$ has been identified, the difficulty lies in finding the join.
	The algorithm checks all of the children $c_1, \dots, c_t$ of $h_{i^*}$ to find an element $c_j$ above $x \join y$.
	This step requires $O(d)$ time as $t \leq d$.	
	If the algorithm succeeds in finding $c_j$ then it compares $x$ and $y$ with all of the elements in the local downset of $c_j$ to determine the join.
	By the thinness property, this step takes only $O(\sqrt n)$ time.
	If no such $c_j$ exists, then $h_{i^*}$ must be the only element in $B_{i^*}$ above both $x$ and $y$.
	Thus, this data structure finds $x \join y$ in $O(\sqrt n + d)$ time.
\end{proof}

\begin{theorem}
	\label{thm:recurse}
	There is a data structure for lattices that requires $O(n^{3/2})$ space and computes the join of two elements in $O(d\frac{\log n}{\log d})$ time.
\end{theorem}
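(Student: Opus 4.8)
The plan is to generalize the simple data structure of Theorem~\ref{thm:joinfinding} recursively. The $O(\sqrt n + d)$ bound has two parts: the $O(d)$ cost of scanning the children of the block header $h_{i^*}$, and the $O(\sqrt n)$ cost that appears twice --- once to locate the block $B_{i^*}$ among the $\le \sqrt n + 1$ block headers, and once to search the local downset of the chosen child $c_j$. The $O(d)$ term seems hard to avoid: once we know the answer lies in $\down h_{i^*}$ and below one of $h_{i^*}$'s children, we must identify which child, and $h_{i^*}$ may have up to $d$ children. But the two $O(\sqrt n)$ terms can be attacked recursively. The idea is to recurse on the local downset of $c_j$: rather than linearly scanning $\down c_j \cap B_{i^*}$ (size $< \sqrt n$), we build the same kind of data structure recursively inside each block. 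Choosing the block size at each level of recursion so that after $\log n / \log d$ levels the pieces have size $O(d)$ (or constant), the cost at each level is dominated by an $O(d)$ child-scan plus an $O(d)$ block-locating step, giving $O(d \cdot \frac{\log n}{\log d})$ total. Concretely, I would use a block size of roughly $n/d$ at the top level so there are $O(d)$ blocks, make the block-locating step (step (1) of the previous algorithm) cost $O(d)$ instead of $O(\sqrt n)$, then recurse into a block of size $\le n/d$ with the same ratio, and so on; the recursion depth is $\log_d n = \frac{\log n}{\log d}$.

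The key steps, in order, are: (1) Set up the recursive decomposition --- define a hierarchy of block decompositions where the lattice is split into $O(d)$ blocks, each block recursively split into $O(d)$ subblocks, etc., terminating when pieces have size $O(d)$; a piece of size $s$ at one level uses block size $s/d$. (2) At each level store ($\cA$)-type arrays (meet with each block header, over the whole enclosing piece) and ($\cB$)-type downset dictionaries for order testing, exactly as before, so that order testing remains $O(1)$ globally. (3) Describe the join query: starting at the top level, spend $O(d)$ time comparing $x$ and $y$ against the $O(d)$ block headers to find the innermost block $B_{i^*}$ containing $x \join y$ (as argued in Theorem~\ref{thm:joinfinding}, $x \join y$ lies in the block of the first header that dominates both), then spend $O(d)$ time scanning the children of $h_{i^*}$ to find the child $c_j$ above $x \join y$ (or conclude $x \join y = h_{i^*}$), then recurse inside the piece $\down c_j \cap B_{i^*}$, whose size is a factor $d$ smaller. (4) Analyze: each of the $O(\frac{\log n}{\log d})$ levels costs $O(d)$, giving the claimed time. (5) Analyze space: as in Section~\ref{sec:order}, the ($\cA$) arrays at a level cost (number of headers) $\times$ (size of enclosing piece), summing to $O(n^{3/2})$ per level by the same Jensen-type bound; with $O(\frac{\log n}{\log d})$ levels this is $O(n^{3/2}\frac{\log n}{\log d})$ --- so some care is needed to get exactly $O(n^{3/2})$, likely by charging the ($\cA$) arrays more cleverly (storing meets only with headers of the current piece, not all ancestors) or by observing that the block sizes shrink geometrically so the total is dominated by the top level.

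The main obstacle I anticipate is the space analysis, not the query algorithm. The naive bound multiplies the per-level $O(n^{3/2})$ by the number of levels; getting it down to $O(n^{3/2})$ requires that the information stored at deeper levels be genuinely cheaper. Since at depth $j$ the pieces have size $n/d^j$ and there are $d^j$ of them, and the ($\cA$) array for a piece of size $s$ with $O(d)$ headers costs $O(sd)$, the total at depth $j$ is $d^j \cdot (n/d^j) \cdot d = nd$ --- not $n^{3/2}$ and not obviously summable to $n^{3/2}$ over all depths unless $d$ is small. This suggests the actual construction must store something weaker at deeper levels, or must set $d \le \sqrt n$ somewhere (the excerpt does say this structure "works especially well when $d \le \sqrt n$"), or must use that the ($\cB$) dictionaries already give order testing for free so the ($\cA$) arrays need only be stored at shallow levels. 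I would expect the real proof to store ($\cA$)-arrays only down to the level where pieces reach size $\sqrt n$ and handle deeper recursion using only the thinness property and ($\cD$)-style tables, so that the space telescopes to $O(n^{3/2})$ while the time stays $O(d\frac{\log n}{\log d})$. A secondary subtlety is the correctness claim that the answer always lies in the recursively-chosen piece: this needs the same argument as in Theorem~\ref{thm:joinfinding} applied at each level, using that within a block the block headers were extracted in a fixed order and $x \join y$ belongs to the first one dominating both $x$ and $y$.
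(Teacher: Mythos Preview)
Your query algorithm is essentially the paper's: recursively decompose with block size $n/d$ so there are $O(d)$ headers per level, at each level compare $x$ and $y$ to those $O(d)$ headers and then to the $O(d)$ children of the chosen header, and recurse into a piece of size at most $n/d$; depth $O(\frac{\log n}{\log d})$, cost $O(d)$ per level. The paper phrases this as an alternation between a block decomposition (size $n/d$) and a \emph{cover decomposition} (splitting a block among the children of its header), but the effect is the same.

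The genuine gap is in the space analysis, and your speculative fixes all miss the actual resolution. You are trying to store ($\cA$)- and ($\cB$)-type information at every level of the recursion and then make the resulting sum telescope; as you correctly compute, this gives $\Theta(nd)$ per level and does not sum to $O(n^{3/2})$ in general. The paper's solution is much simpler: it does \emph{not} store any ($\cA$) or ($\cB$) information for the recursive decomposition at all. It stores only the decomposition tree itself (each lattice element appears at most once, so $O(n)$ nodes and pointers) together with, at each leaf, the list of elements in that leaf's chunk (the chunks are disjoint, so $O(n)$ total). All of the order comparisons performed during the query --- ``is $x \le v_i$?'' for each child $v_i$ of the current tree node --- are answered by the single \emph{global} order-testing structure of Section~\ref{sec:order}, built once on the whole lattice at cost $O(n^{3/2})$. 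That global structure is where the entire $O(n^{3/2})$ space bound comes from; the recursive part adds only $O(n)$.

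In short: you correctly identified the obstacle but looked for a clever per-level charging argument, whereas the right move is to decouple the recursion from the order-testing machinery entirely and reuse the already-built $O(n^{3/2})$ structure as a black box.
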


\begin{proof}
	We extend the ideas of Theorem~\ref{thm:joinfinding} using a recursive decomposition of a lattice.
	
	The recursive decomposition works in two stages.
	First, we perform a block decomposition of $L$ using the block size $n/d$.
	This produces up to $d+1$ blocks $B_1, \dots, B_m$.
	
	We decompose each $B_i$ further using a \emph{cover decomposition}.
	If $B_i$ has a block header $h_i$ and $c_1, c_2, \dots, c_t \in B_i$ are the elements covered by $h_i$, then a cover decomposition of $B_i$ is a partition of $B_i$ into the sets
	\[C_{i,j} = (\down c_j \cap B_i) \setminus (\bigcup_{\ell=1}^{j-1} \down c_j) \text{ for } 1 \leq j \leq t.\]
	We call these sets \emph{chunks} to avoid overloading ``block'' and we call $c_j$ the \emph{chunk header} of $C_{i,j}$.
	Unlike a block decomposition, a cover decomposition does not depend on a block size. It is unique up to the ordering of $c_1, \dots, c_t$.
	
	So far, our decomposition produces blocks $\{B_1, \dots, B_m\}$ and chunks $\{C_{i,j} \mid 1 \leq i \leq m, 1 \leq j \leq \deg(h_i)\}$.
	We recursively decompose every chunk $C_{i,j}$ in the same two stages, first by a block decomposition with block size $\frac{|C_{i,j}|}{d}$ and then by a cover decomposition of each of the resulting blocks.
	The recursive decomposition continues in this fashion on any chunk with size at least $2d$.
	
	The recursion induces a tree structure on the set of block headers and chunk headers in the lattice.
	The children of each block header are the chunk headers chosen during its decomposition and vice versa.
	The order of the children of a node corresponds to the order that the blocks or chunks are taken during the decomposition.
	Finally, we create one special node to act as the root of the tree. The children of the root are the block headers of the initial decomposition.
	We call this the \emph{decomposition tree}.
	
	It is easy to see that every lattice element occurs at most once in the tree and that the maximum degree of any tree node is at most $d+1$.
	Less obvious is the fact that the depth of the tree is $O(\frac{\log n}{\log d})$.
	
	To see this, let $c$ be a chunk header, let $h$ be one of its children in the tree, and let $c'$ be a child of $h$.
	Assume $c$ is the header of a chunk $C$, $h$ is the header of a block $B$ contained in $C$, and $c'$ is the header of a chunk $C'$ contained in $B$; see Figure~\ref{fig:decomp}.
	Block $B$ was formed during a block decomposition of $C$ with size $|C|/d$.
	Since $h$ covers $c'$ in $B$, $c'$ must have been a thin node during that decomposition.
	The chunk $C'$ was then created from the local downset of $c'$ in $B$.
	Thus $|C'| \leq |\down c' \cap B| \leq |C|/d$.
	
	\begin{figure}[!ht]
		\begin{center}
			\includegraphics[scale=0.7]{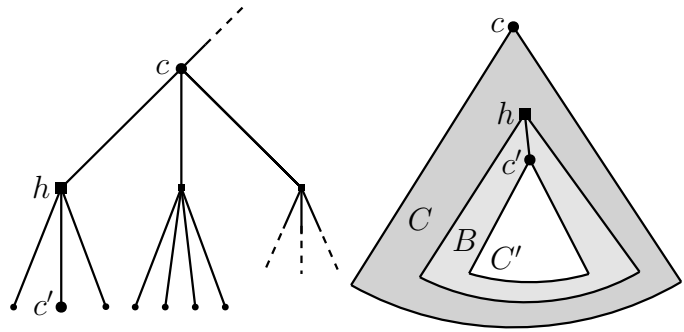}
		\end{center}
		\caption{Three nodes in the decomposition tree and the corresponding chunks and block of the recursive decomposition. The size of $C'$ can be no larger than $|C|/d$.}
		\label{fig:decomp}
	\end{figure}
	
	This implies that the size of chunks decreases by a factor of $d$ between every chunk header and its grandchildren in the decomposition tree.
	After $2\lceil \frac{\log n}{\log d}\rceil$ generations in the decomposition tree, every chunk must have size less than $2d$.
	This proves the claim.
	
	The data structure is now simple to describe.
	We store the decomposition tree and, for each leaf, we store a list of the elements in the chunk of that chunk header.
	Since the chunks represented by leaves are pairwise disjoint, only $O(n)$ space is needed for this structure.
	Additionally, we create and store the order-testing structure of Section~\ref{sec:order}, bringing the total space to $O(n^{3/2})$.

	The join of two elements can be found using a recursive version of the algorithm from Theorem~\ref{thm:joinfinding}.
	Suppose we are given $x, y \in L$ and must determine $x \join y$.
	Through a variable $u$ that represents the node being considered, we recursively traverse the decomposition tree.
	Initially set $u$ equal to the root and proceed as follows.
	
	\subsection*{Base Case:}
	If $u$ is a leaf, then consider the stored list of elements for $u$.
	Find $x \join y$ by comparing $x$ and $y$ to every element in the list and returning the smallest node $z$ with $x, y \leq z$.
	
	\subsection*{Recursive Case:}
	If $u$ is not a leaf, let $v_1, v_2, \dots, v_k$ be the children of $u$ in the decomposition tree, listed in order.
	Use order-testing to compare $x$ and $y$ to each $v_i$.
	If there is no $v_i$ such that $x \le v_i$ and $y \le v_i$, then conclude that $x \join y = u$.
	Otherwise, let $i^* \in \{1, \dots, k\}$ be the smallest value for which $x \le v_{i^*}$ and $y \le v_{i^*}$.
	Recurse on $v_{i^*}$.
	
	This procedure spends $O(d)$ time on each node.
	In the base case, the list stored for $u$ has length $O(d)$ and the join can be found in this list in linear time.
	The recursive case takes $O(d)$ time as well since the maximum degree of the decomposition tree is at most $d+1$.
	As the depth of the tree is $O(\frac{\log n}{\log d})$, the total time of this procedure is $O(d \frac{\log n}{\log d})$.
	
	Correctness is a consequence of the fact that $x \join y$ lies in the first block of each block decomposition whose header is above both $x$ and $y$.
	The same fact holds for the chunks in a cover decomposition.
	Thus, each time $i^*$ is chosen in the recursive case, it must be that $x \join y$ lies in the block or chunk for $v_{i^*}$.
\end{proof}

\section{Conclusions}
\label{sec:conclusion}

We have presented a data structure to represent lattices in $O(n^{3/2})$ words of space, which is within a $\Theta(\log n)$ factor of optimal. It answers order queries in constant time and meet or join queries in $O(n^{3/4})$ time.
This work is intended to replace the earlier solution to this problem which was incorrect; see Appendix~\ref{sec:error} for a discussion of the error.
Our degree-bounded data structure uses $O(n^{3/2})$ space and answers meet or join queries in $O(d \frac{\log n}{\log d})$ time. For some low-degree lattices, this structure improves dramatically on our subblock-based approach.
Ours are the only data structures known to us that uses less than the trivial $O(n^2)$ space.

We wonder what can be done to improve on our results.
The time to answer meet and join queries may yet be reduced, perhaps to the $O(\sqrt n)$ bound claimed by~\cite{TV99}.
Another natural question is whether the space of the representation can be reduced to the theoretical minimum of $\Theta(n^{3/2})$ bits.

\bibliography{LAT}

\providecommand{\noopsort}[1]{}
\begin{thebibliography}{10}

\bibitem{ABLN89}
Hassan A{\"\i}t-Kaci, Robert Boyer, Patrick Lincoln, and Roger Nasr.
\newblock Efficient implementation of lattice operations.
\newblock {\em ACM Transactions on Programming Languages and Systems (TOPLAS)},
  11(1):115--146, 1989.

\bibitem{Bir37}
Garrett Birkhoff.
\newblock Rings of sets.
\newblock {\em Duke Mathematical Journal}, 3(3):443--454, 1937.

\bibitem{BM99}
Andrej Brodnik and J~Ian Munro.
\newblock Membership in constant time and almost-minimum space.
\newblock {\em SIAM Journal on Computing}, 28(5):1627--1640, 1999.

\bibitem{Cas93}
Yves Caseau.
\newblock Efficient handling of multiple inheritance hierarchies.
\newblock In {\em ACM SIGPLAN Notices}, volume~28, pages 271--287. ACM, 1993.

\bibitem{Cas99}
Yves Caseau, Michel Habib, Lhouari Nourine, and Olivier Raynaud.
\newblock Encoding of multiple inheritance hierarchies and partial orders.
\newblock {\em Computational Intelligence}, 15(1):50--62, 1999.

\bibitem{EHR02}
Marcel Ern{\'e}, Jobst Heitzig, and J{\"u}rgen Reinhold.
\newblock On the number of distributive lattices.
\newblock {\em Electron. J. Combin}, 9(1):23, 2002.

\bibitem{FM06}
Arash Farzan and J~Ian Munro.
\newblock Succinct representation of finite abelian groups.
\newblock In {\em Proceedings of the 2006 international symposium on Symbolic
  and algebraic computation}, pages 87--92. ACM, 2006.

\bibitem{Gan05}
Bernhard Ganter, Gerd Stumme, and Rudolf Wille.
\newblock {\em Formal concept analysis: foundations and applications}, volume
  3626.
\newblock Springer, 2005.

\bibitem{Gra16}
George Gr{\"a}tzer and Friedrich Wehrung.
\newblock {\em Lattice theory: special topics and applications}.
\newblock Springer, 2016.

\bibitem{Hab01}
Michel Habib, Raoul Medina, Lhouari Nourine, and George Steiner.
\newblock Efficient algorithms on distributive lattices.
\newblock {\em Discrete Applied Mathematics}, 110(2):169--187, 2001.

\bibitem{Hab96}
Michel Habib and Lhouari Nourine.
\newblock Tree structure for distributive lattices and its applications.
\newblock {\em Theoretical Computer Science}, 165(2):391--405, 1996.

\bibitem{K62}
Arthur~B Kahn.
\newblock Topological sorting of large networks.
\newblock {\em Communications of the ACM}, 5(11):558--562, 1962.

\bibitem{KW80}
DJ~Kleitman and KJ~Winston.
\newblock The asymptotic number of lattices.
\newblock {\em Annals of Discrete Mathematics}, 6:243--249, 1980.

\bibitem{KL71}
Walter Klotz and Lutz Lucht.
\newblock Endliche verb{\"a}nde.
\newblock {\em Journal f{\"u}r die Reine und Angewandte Mathematik},
  247:58--68, 1971.

\bibitem{Kra97}
Andreas Krall, Jan Vitek, and R~Nigel Horspool.
\newblock Near optimal hierarchical encoding of types.
\newblock In {\em European Conference on Object-Oriented Programming}, pages
  128--145. Springer, 1997.

\bibitem{Mon03}
Bernard Monjardet.
\newblock The presence of lattice theory in discrete problems of mathematical
  social sciences. {W}hy?
\newblock {\em Mathematical Social Sciences}, 46(2):103--144, 2003.

\bibitem{MS18}
J~Ian Munro and Corwin Sinnamon.
\newblock Time and space efficient representations of distributive lattices.
\newblock In {\em Proceedings of the Twenty-Ninth Annual ACM-SIAM Symposium on
  Discrete Algorithms}, pages 550--567. Society for Industrial and Applied
  Mathematics, 2018.

\bibitem{Nie15}
Flemming Nielson, Hanne~R Nielson, and Chris Hankin.
\newblock {\em Principles of program analysis}.
\newblock Springer, 2015.

\bibitem{Pat11}
Mihai P{\u{a}}tra{\c{s}}cu.
\newblock Unifying the landscape of cell-probe lower bounds.
\newblock {\em SIAM Journal on Computing}, 40(3):827--847, 2011.

\bibitem{TV94}
Maurizio Talamo and Paola Vocca.
\newblock Fast lattice browsing on sparse representation.
\newblock In {\em Orders, Algorithms, and Applications}, pages 186--204.
  Springer, 1994.

\bibitem{TV97}
Maurizio Talamo and Paola Vocca.
\newblock A data structure for lattice representation.
\newblock {\em Theoretical Computer Science}, 175(2):373--392, 1997.

\bibitem{TV99}
Maurizio Talamo and Paola Vocca.
\newblock An efficient data structure for lattice operations.
\newblock {\em SIAM journal on computing}, 28(5):1783--1805, 1999.

\bibitem{Tho04}
Mikkel Thorup.
\newblock Compact oracles for reachability and approximate distances in planar
  digraphs.
\newblock {\em Journal of the ACM (JACM)}, 51(6):993--1024, 2004.

\bibitem{Wil82}
Rudolf Wille.
\newblock Restructuring lattice theory: an approach based on hierarchies of
  concepts.
\newblock In {\em Ordered sets}, pages 445--470. Springer, 1982.

\bibitem{Win79}
Kenneth~James Winston.
\newblock {\em Asymptotic analysis of lattices and tournament score vectors.}
\newblock PhD thesis, Massachusetts Institute of Technology, 1979.

\end{thebibliography}
\newpage
\appendix

\section{Initializing the Data Structure}
\label{sec:initialize}

We now show how ($\cA$), ($\cB$), ($\cC$), ($\cD$), and ($\cE$) can be constructed in $O(n^2)$ time.
We assume that we have access to the TRG of the lattice and that the block and subblock decompositions have already been computed.

\begin{itemize}
	\item[($\cA$)] Some care is required to construct ($\cA$) efficiently.
	Let $x_1, \dots, x_n$ be a linear extension of $L$.
	Consider a principal block $B_i$ with block header $h_i$.
	To find $z \meet h_i$ for each $z \in L$, we do the following.
	\be
		\item Initialize an array of length $n$ to store the meet of $h_i$ with each element and populate the array with $\NULL$ in every entry.
		
		\item Perform a DFS to find $\down h_i$ in $L$. Note that $\down h_i$ may be considerably larger than $B_i$.
		Put the elements of $\down h_i$ (note that this includes $h_i$) into a linear extension $y_1, \dots, y_k$ by restricting the linear extension of $L$ to these elements.
		
		\item Traverse the nodes in reverse order of this extension (beginning with $y_k$ and ending at $y_1$).
		For each node $y_j$, perform a DFS on the \emph{upset} of that node in the full lattice.
		For every node $z$ visited during the DFS for node $y_j$, record that $z \meet h_i = y_j$ in the array, and then mark $z$ so that it will not be visited by later graph searches.
		After all the nodes in $\down h_i$ have been processed, restore the lattice by unmarking all nodes.
	\ee
	By this method, the entry for $z \meet h_i$ in the array is recorded to be the last element in the linear extension of $\down h_i$ that is below $z$.
	This must be the correct node because it is below both $z$ and $h_i$, and every other element below $z$ and $h_i$ occurs earlier in the linear extension.
	Whenever $z \meet h_i$ does not exist in the lattice, the array entry for $z \meet h_i$ is the default value $\NULL$.
	
	The time for this procedure is bounded by the number of edges in the TRG for $L$ because no node is visited more than once over all of the graph searches.
	Recall that the number of edges in the TRG is $O(n^{3/2})$.
	Summing over all block headers, the total time to create ($\cA$) is at most $O(n^{3/2} \sqrt n) = O(n^2)$.

	\item[($\cB$)] This can be computed by performing a DFS on the local downset of each node and adding the elements visited to a dictionary for that node.
	
	Initializing and populating the space-efficient dictionary of~\cite{BM99} takes time linear in the number of dictionary entries.
	Excluding the block headers, the local downsets have at most $\sqrt n$ nodes and $O(n^{3/4})$ edges; hence the time spent on all non-block headers is at most $O(n^{7/4})$.
	The local downsets of the block headers are all disjoint, so the total time required is $O(n^{7/4})$.
	
	\item[($\cC$)] Use the same method for ($\cA$) restricted to each block to compute ($\cC$). The total time is $O(\sum_{i=1}^m |B_i|^2)$, which is no larger than $O(n^2)$.
	
	\item[($\cD$)]  The method of ($\cA$) can also be used to compute ($\cD$). For each element $z$ in a principal subblock $S_{i, j}$, find the meet of $z$ with every other element in the subblock in $O(|S_{i,j}|^{3/2})$ time, where $z$ plays the role of $h_i$ in the method for ($\cA$). It takes $O(|S_{i,j}|^{5/2})$ time to do this for every element in a single subblock and the total time is proportional to
	\[
	\sum_{i=1}^m \sum_{j=1}^{\ell_i} |S_{i,j}|^{5/2} \leq  \sum_{i=1}^m \sum_{j=1}^{\ell_i} |S_{i,j}|(\sqrt{n})^{3/2} \leq n^{7/4}.
	\]
	The first inequality uses the fact that each subblock has size at most $\sqrt n$.
	The second inequality holds because the subblocks are disjoint.
	
	\item[($\cE$)] Each linked list can be constructed by performing a DFS on the downset of each element in a residual subblock. This takes $O(n^{7/4})$ time as in the analysis for ($\cB$).
\end{itemize}

\section{Correcting Earlier Work}
\label{sec:error}
As stated in the introduction, this paper relies on ideas from the lattice data structure of~\cite{TV94, TV97, TV99}.
These papers contain a mistake that we believe is not easily repaired.
The purpose of this section is to summarize their techniques, explain where the error occurs, and argue that it cannot be fixed by a minor modification. We urge the interested reader to consult~\cite{TV99} to confirm this analysis.

We restate their algorithm in the language of this paper.
In the interest of a clear and concise explanation, we do not rebuild all the machinery of their work.
In particular, we ignore their \emph{double-tree} structure and we only consider blocks made from downsets (in their papers, blocks may be built from upsets or downsets). In our observation, the double-tree structure is necessary only as a null/non-null value check for order testing and meet/join queries (thus a simple dictionary suffices); further, while we have concerns about using both upsets and downsets for blocks, using downsets alone avoids such issues and still satisfies the requirements in their papers (Lemma 4.1 in~\cite{TV99}). We take these liberties for the purpose
of quickly coming to the relevant issue.
Readers will need to confirm for themselves
that our explanation is fundamentally accurate.


Their method relies on a lattice decomposition to build the data structure, and our block decomposition is similar to the basic version of the decomposition described in their papers.
Note that what we call ``blocks'' are called ``ideals'' in~\cite{TV97} and ``clusters'' in~\cite{TV99}.
They do not decompose the lattice at a second level like our subblock decompositions.
The error is introduced in the extended version of their lattice decomposition, which we now describe.

The intuition behind their data structure is that everything would be easier if every block had size $\Theta(\sqrt n)$, say between $\sqrt n$ and $2\sqrt n$.
If this were the case, then we could afford to explicitly store the meet/join and reachability property
between every pair of elements from the same block,
as this would use roughly $\sum_{i=1}^{\sqrt n} (\sqrt n)^2 = O(n^{3/2})$ space.
This would allow the meet of two elements from the same block to be found in constant time by a simple table lookup.
In terms of our meet-finding algorithm from Section~\ref{sec:meet}, this would reduce the time for \textproc{Meet-In-Block} to a constant and the time for \textproc{Meet} to $O(\sqrt n)$.

\subsection*{Dummy Nodes}

A block decomposition by itself cannot guarantee anything about the sizes of the blocks except that each is at least $\sqrt n$.
They attempt to simulate blocks of size $\sqrt n$ by modifying the transitive reduction graph (TRG) of the lattice, creating ``dummy nodes'' with downsets of size $\Theta(\sqrt n)$ to act as block headers when none exist naturally.




Dummy nodes are introduced as follows.
Suppose a block $B$ is created that has more than $2\sqrt n$ elements.
Assume that the block header has children $c_1, c_2, \dots, c_t$ in the TRG.
Consider the sequence 
\[|\down c_1 \cap B|, |(\down c_1 \cup \down c_2) \cap B|, \dots, |(\down c_1 \cup \cdots \cup \down c_t) \cap B|.\]
As each of the children is a thin element (its local downset has size less than $\sqrt n$), the difference between adjacent numbers in this sequence is less than $\sqrt n$.
Thus, there is some $i \in \{1, \dots, k\}$ such that \[\sqrt n \leq |(\down c_1 \cup \cdots \cup \down c_i) \cap B| \leq 2 \sqrt n.\]

The children $c_1, \dots, c_i$ may be grouped together and the set $(\down c_1 \cup \cdots \cup \down c_i) \cap B$ may be considered as an \emph{artificial} block having size $\Theta(\sqrt n)$.
By removing this artificial block and iterating on the remaining children, $B$ is partitioned into a collection of artificial blocks with sizes between $\sqrt n$ and $2 \sqrt n$ (except that there may be one smaller block at the end).
The only difference between these artificial blocks and ordinary principal blocks is that they lack a block header.

To remedy this, a dummy node is introduced at the top of each artificial block.
That is, a new element $d$ is created and inserted into the TRG with $c_1, \dots, c_i$ as its in-neighbours and the block header of $B$ as its only out-neighbour.

\begin{figure}[!ht]
	\begin{center}
		\includegraphics[scale=0.7]{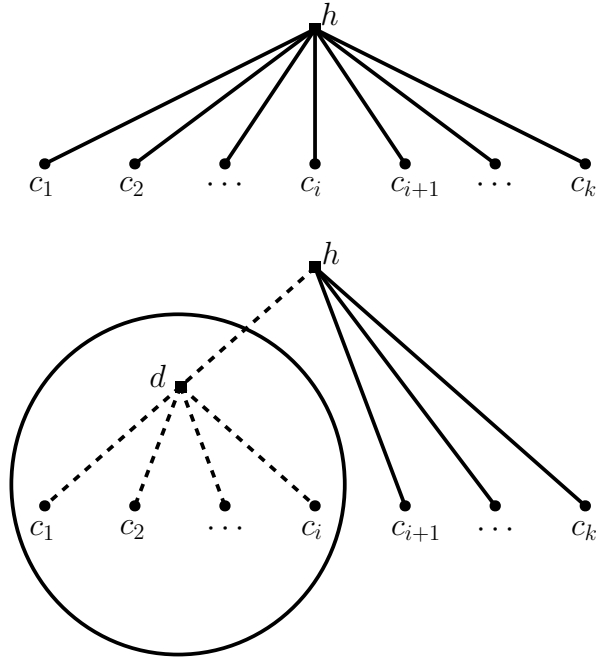}
	\end{center}
	\caption{Dummy nodes are inserted between the block header and its children to simulate blocks of size $\Theta(\sqrt n)$.}
	\label{fig:dummy}
\end{figure}

Talamo and Vocca rely on the fact that the graph still represents a partial lattice after adding dummy nodes in this way. They state on page 1794 of~\cite{TV99}:
\begin{quote}
``By construction, the dag obtained by adding dummy vertices still satisfies the lattice property.''
\end{quote}
Unfortunately, this claim is not true in many cases.
Consider the stripped-down example in Figure~\ref{fig:error}.
The lattice on the left is changed to the graph on the right by introducing a dummy node as described.
However, the graph on the right fails the lattice property because the join of $x$ and $y$ is not well-defined:
Both $c_3$ and $d$ are minimal among elements in $\up x \cap \up y$.
Symmetrically, the meet of $c_3$ and $d$ is not well-defined either.
In this case, adding $d$ broke the lattice property.

Although the example is on a very small lattice, it scales easily to any size.
Any number of nodes could be added to the original lattice so that $|\down c_1 \cup \down c_2| \in [\sqrt n, 2\sqrt n]$.
The dummy node added in this case would still violate the lattice property.

\begin{figure}[!ht]
	\begin{center}
		\includegraphics[scale=0.7]{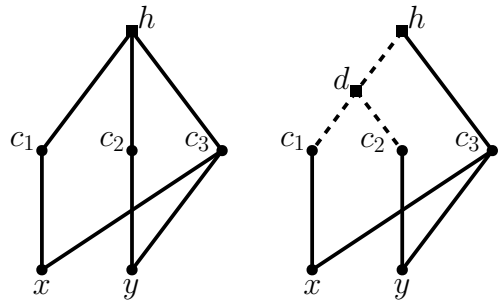}
	\end{center}
	\caption{Inserting a dummy node breaks the lattice property.}
	\label{fig:error}
\end{figure}

This detail is easy to overlook, especially since $\down d \cap B$ is necessarily a partial lattice.
However, the lattice property may fail in the larger structure when dummy nodes are added. This fundamentally impacts the correctness of their approach.

\subsection*{Impact Of The Error}

With dummy nodes, it is no longer true that every element has a unique representative in each block. Talamo and Vocca use the following text on page 1789 of~\cite{TV99}, ``given an external vertex $v$, the pair $(v, Clus(c))$ univocally identifies a vertex $u \in Clus(c)$ representing either the $\texttt{LUB}(Clus^+(c) \cap Clus^+(v))$ or the $\texttt{GLB}(Clus^-(c)\cap Clus^-(v))$.'' In the language of our paper, the claim is that for every block header $h$, any external element $v$ must have a unique representative $x \meet h$. Consider again the example in Figure~\ref{fig:error} with $d$ as the block header of its downset. The external element $c_3$ does not have a unique representative in the block headed by $d$, since the meet of $d$ and $c_3$ is now undefined.

In~\cite{TV99}, this breaks Lemma 3.1 when $c$ is a dummy node, which in turn breaks Lemma 3.3 and implies their data structure $C$ on page 1792 of~\cite{TV99} would need to keep multiple entries for an element-cluster pair in order to guarantee correctness of the reachability algorithm described below it. We see no reason why the number of such representatives stored per element should be small, nor that the total number of representatives stored should be small, which undermines both the proposed query and space complexities.

The same issue arises in Talamo and Vocca's meet and join algorithms. The algorithm given relies on the unique representative of an element with a block, and without it, neither the $O(n\sqrt{n})$ space bound nor the $O(\sqrt{n})$ time bound on meet or join operations follow in Proposition 6.4 of~\cite{TV99}.

Further, if dummy nodes are avoided altogether, the space bound can be $\Omega(n^2)$, as explained on page 1793 of~\cite{TV99}.

It has been suggested to us that the issues may be avoided if the dummy nodes are not considered as actual nodes of the lattice itself, but instead as a construct to group small clusters together for a counting reason. That is, the claim is that an $O(n \sqrt{n})$-space $O(1)$-time order-testing structure can be made without tangibly introducing dummy nodes. As we have shown in this paper, this is indeed true. However, let us emphasize that the work of Talamo and Vocca does not achieve this. It describes a very different technique that crucially relies on the unique representative property remaining true after grouping clusters using dummy nodes, which does not hold in general regardless of whether dummy nodes are actually inserted into the graph or just used as a conceptual tool. With their techniques, we see no way to achieve their claimed $O(\sqrt{n})$ time meet/join algorithm without their erroneous dummy nodes. We give evidence in the following section as to why this might be infeasible.

\subsection*{Can It Be Fixed?}

It is natural to search for a small change to the dummy node method that will fix this issue,
allowing us to effectively perform a block decomposition where every principal block has size $\Theta(\sqrt n)$.
It is especially tempting to do so because it could reduce the time for meet and join operations from $O(n^{3/4})$ to $O(\sqrt n)$, as is claimed in~\cite{TV99}.
The dummy node technique also seems like a reasonable approach to handling high-degree lattice nodes, which have often been an obstacle to the approaches we have considered.

There is good reason to expect that this is not possible, relying on some small assumptions.
Suppose that there were a correct method of creating artificial principal blocks and that the method still works when we increase the block size from $\sqrt n$ to $n^{2/3}$.
That is, suppose that we can reliably decompose any lattice into $\Theta(n^{1/3})$ blocks of size $\Theta(n^{2/3})$ (and perhaps some $O(n^{1/3})$ smaller blocks). Note that the structure of the lattice has no impact on the ability to apply this method, thus we assume it applicable to all lattices.

There is the remaining issue of the residual block, however this is not a major difficulty.
By adding a top element to the partial lattice (as in a complete lattice), we can treat the residual block in the same fashion as a principal block using the new top element as its block header.

Since the number of lattices on $k$ elements is $2^{\Theta(k^{3/2})}$, it is possible to uniquely identify any such lattice using only $\Theta(k^{3/2})$ bits.
Thus, each block of size $\Theta(n^{2/3})$ can be encoded in $\Theta(n)$ bits, and all of the blocks in the decomposition can be encoded in $\Theta(n^{4/3})$ bits.
The order between any pair of elements in the same block can be tested, however inefficiently, using the encoding for that block.
As well, since there are only $\Theta(n^{1/3})$ block headers, all of the meets between a lattice element and a block header can be stored in $\Theta(n^{4/3})$ space.
In other words, we can simulate both ($\cA$) and ($\cB$) in only $O(n^{4/3})$ space.

This information is sufficient to perform order-testing between any pair of elements, and thus it uniquely determines the lattice.
Lattices do not permit such a small representation; this would violate the $\Theta(n^{3/2})$-bit lower bound.
This strongly suggests that artificial blocks cannot be simulated without sacrificing the unique representative property, which is essential to the data structure.

\end{document}